\newtheorem{theorem}{Theorem}
\newtheorem{corollary}[theorem]{Corollary}
\newtheorem{lemma}[theorem]{Lemma}
\newtheorem{algorithm}[theorem]{Algorithm}
\newtheorem{condition}[theorem]{Condition}
\newcommand{\e}{\varepsilon}
\newcommand{\OPT}{\textsc{Opt}}
\newcommand{\supp}[1]{\mathop{\textnormal{supp}}(#1)}
\newcommand{\jcom}[1]{}
\newcommand{\kkcom}[1]{}
\newcommand{\kjcom}[1]{}
\newcommand{\gz}{{\mathbb{Z}}}
\newcommand{\Jbig}{\mathcal{J}_{\textnormal{big}}}
\newcommand{\Jsmall}{\mathcal{J}_{\textnormal{small}}}
\newcommand{\Jround}{\mathcal{J}^*}
\newcommand{\Jnew}{\mathcal{J}_{\textnormal{new}}}
\renewcommand{\textcolor}[2]{#2}
\title{Closing the Gap for Makespan Scheduling\\ via Sparsification Techniques\footnote{This work was partially supported by DFG Project, Entwicklung und Analyse von effizienten polynomiellen Approximationsschemata f\"ur Scheduling- und verwandte Optimierungsprobleme, Ja 612/14-2, by FONDECYT project 3130407, and by Nucleo Milenio Informaci\'on y Coordinaci\'on en Redes ICM/FIC RC130003.}}
\author[1]{Klaus Jansen}
\author[1]{Kim-Manuel Klein}
\author[2]{Jos\'e Verschae}
\affil[1]{University of Kiel, Department of Computer Science, Kiel, Germany. E-mail:\texttt{\{kj,kmk\}@informatik.uni-kiel.de} }
\affil[2]{Pontificia Universidad Católica de Chile, Facultad de Matemáticas \& Escuela de Ingeniería, Santiago, Chile. E-mail:\texttt{jverschae@uc.cl}}
\date{\today}
\begin{document}

\maketitle

\begin{abstract}
Makespan scheduling on identical machines is one of the most basic and fundamental packing problems studied in the discrete optimization literature. It asks for an assignment of $n$ jobs to a set of $m$ identical machines that minimizes the makespan. The problem is strongly NP-hard, and thus we do not expect a $(1+\e)$-approximation algorithm with a running time that depends polynomially on $1/\e$. Furthermore, Chen et al.~\cite{chen_optimality_2013} recently showed that a running time of $2^{(1/\e)^{1-\delta}}+\text{poly}(n)$ for any $\delta>0$ would imply that the Exponential Time Hypothesis (ETH) fails. 
A long sequence of algorithms have been developed that try to obtain low dependencies on $1/\e$, the better of which achieves a running time of $2^{\tilde{O}(1/\e^2)}+O(n\log n)$~\cite{jansen_eptas_2010}. In this paper we obtain an algorithm with a running time of $2^{\tilde{O}(1/\e)}+O(n\log n)$, which is tight under ETH up to logarithmic factors on the exponent. 

Our main technical contribution is a new structural result on the \emph{configuration-IP}. More precisely, we show the existence of a highly symmetric and sparse optimal solution, in which all but a constant number of machines are assigned a configuration with small support. This structure can then be exploited by integer programming techniques and enumeration. We believe that our structural result is of independent interest and should find applications to other settings. In particular, we show how the structure can be applied to the minimum makespan problem on related machines and to a larger class of objective functions on parallel machines. For all these cases we obtain an efficient PTAS with running time \textcolor{red}{$2^{\tilde{O}(1/\e)} + \text{poly}(n)$.}
 \end{abstract}

\section{Introduction}
\label{sec:intro}

Minimum makespan scheduling is one of the foundational problems in the literature on approximation algorithms~\cite{graham_bounds_1966, graham_bounds_1969}.
In the \emph{identical machine} setting the problem asks for an assignment of a set of $n$ jobs $\mathcal{J}$ to a set of $m$ identical machines $\mathcal{M}$. Each job $j\in\mathcal{J}$ is characterized by a non-negative processing time $p_j \in \mathbb{Z}_{> 0}$. The load of a machine is the total processing time of jobs assigned to it, and our objective is to minimize the \emph{makespan}, that is, the maximum machine load. This problem is usually denoted $P||C_{\max}$. It is well known to admit a \emph{polynomial time approximation scheme} (PTAS)~\cite{hochbaum_using_1987}, and there has been many subsequent works improving the running time or deriving PTAS's for more general settings. The fastest PTAS for $P||C_{\max}$ achieves a running time of $2^{O(1/\e^2)\log^3(1/\e))}+O(n\log n)$ for $(1+\e)$-approximate solutions~\textcolor{red}{\cite{jansen_eptas_2010}}.  Very recently, Chen et al.~\cite{chen_optimality_2013} showed that, assuming the \emph{exponential time hypothesis} (ETH), there is no PTAS that yields $(1+\e)$-approximate solutions for $\e>0$ with running time $2^{(1/\e)^{1-\delta}}+ \text{poly}(n)$ for any $\delta>0$~\cite{chen_optimality_2013}.

Given a guess $T\in \mathbb{N}$ on the optimal makespan, which can be found with binary search, the problem reduces to deciding the existence of a packing of the jobs to $m$ machines (or bins) of capacity $T$. If we aim for a $(1+\e)$-approximate solution, for some $\e>0$, we can assume that all processing times are integral and $T$ is a constant number, namely $T \in O(1/\e^2)$. This can be achieved with well known rounding and scaling techniques~\cite{alon_approximation_1997, alon_approximation_1998, hochbaum_approximation_1997} which will be  specified later. Let $\pi_1< \pi_2<\ldots<\pi_d$ be the job sizes appearing in the instance after rounding, and let $b_k$ denote the number of jobs of size $\pi_k$. The mentioned rounding procedure implies that the number of different job sizes is $d=O((1/\e)\log (1/\e))$. Hence, for large $n$ we obtain a highly symmetric problem where several jobs will have the same processing time. Consider the \emph{knapsack polytope} $\mathcal{P} = \{c\in \mathbb{R}^{d}_{\geq 0}: \pi \cdot c \le T\}$. A packing on one machine can be expressed as a vector $c\in Q=\mathbb{Z}^d\cap \mathcal{P}$, where $c_k$ denotes the number of jobs of size $\pi_k$ assigned to the machine. Elements in $Q=\mathbb{Z}^d\cap \mathcal{P}$ are called \emph{configurations}. Considering a variable $x_c\in\mathbb{Z}_{\ge0}$ that decides the multiplicity of configuration $c$ in the solution, our problem reduces to solving the following linear integer program (ILP):

\begin{align}
 \label{eq:ConicComb}
 \text{[conf-IP]}\quad  & \sum_{c\in Q} c\cdot x_c =b,\\
 \label{eq:NumberConf}
 & \sum_{c\in Q} x_c = m,\\
 &x_c \in \mathbb{Z}_{\ge0} & \text{ for all }c\in Q. 
\end{align}

In this article we derive new insights on this ILP that help us \textcolor{red}{to design} faster algorithms for $P||C_{\max}$ and other more general problems. These including makespan scheduling on \emph{related machines} $Q||C_{\max}$, and a more general class of objective functions on parallel machines. We show that all these problems admit a PTAS with running time $2^{O((1/\e)\log^4(1/\e))} + \text{poly}(n)$. Hence, our algorithm is best possible up to polylogarithmic factors in the exponent assuming ETH~\cite{chen_optimality_2013}. 

\subsection{Literature Review}

There is an old chain of approximation algorithms for $P||C_{\max}$, starting from the seminal work by Graham~\cite{graham_bounds_1966,graham_bounds_1969}. The first PTAS was given by Hochbaum and Shmoys~\cite{hochbaum_using_1987} and had a running time of~$(n/\e)^{O((1/\e)^2)} = n^{O((1/\e)^2\log(1/\e))}$. This was improved to $n^{O((1/\e)\log^2(1/\e))}$ by Leung~\cite{leung_bin_1989}. Subsequent articles improve further the running time. In particular Hochbaum and Shmoys (see~\cite{hochbaum_approximation_1997}) and Alon et al.~\cite{alon_approximation_1997,alon_approximation_1998} obtain an \emph{efficient PTAS}\footnote{That is, a PTAS whose running time is $f(1/\e)\text{poly}(|I|)$ where $|I|$ is the encoding size of the input and $f$ is some function.} (EPTAS) with running time $2^{(1/\e)^{\text{poly}(1/\e)}} + O(n\log n)$. Alon et al.~\cite{alon_approximation_1997,alon_approximation_1998} consider general techniques that work for several objective functions, including all $L_p$-norm of the loads and maximizing the minimum machine load. 

The fastest PTAS known up to date for $P||C_{\max}$ achieves a running time of $2^{O((1/\e)^2\log^3(1/\e))} + O(n\log n)$~\cite{jansen_eptas_2010}. \textcolor{red}{More generally, this work gives an EPTAS for the case of related (uniform) machines, where each machine $i\in \mathcal{M}$ has a speed $s_i$ and assigning to $i$ job $j$ implies a processing time of $p_j/s_i$. For this more general case the running time is $2^{O((1/\e)^2\log^3(1/\e))} + \text{poly}(n)$}. For the simpler case of $P||C_{\max}$, the ILP can be solved directly since the number of variables is a constant. This can be done with Lentras' algorithm~\cite{lenstra_integer_1983}, or even with Kannan's algorithm~\cite{kannan_minkowskis_1987} that gives an improved running time. This technique yields a running time that is doubly exponential in $1/\e$. This was, in essence, the approach by Alon et al.~\cite{alon_approximation_1997,alon_approximation_1998} and Hochbaum and Shmoys~\cite{hochbaum_approximation_1997}. To lower the dependency on $1/\e$, Jansen~\cite{jansen_eptas_2010} uses a result by Eisenbrand and Shmonin~\cite{eisenbrand_caratheodory_2006} that implies the existence of a solution $x$ with support of size \textcolor{red}{at most $O(d\log(dT))=O((1/\e)\log^2(1/\e))$}. First guessing the support and then solving the ILP with $O((1/\e)\log^2(1/\e))$ integer variables and using Kannan's algorithm yields the desired running time of $2^{O((1/\e)^2\log^3(1/\e))} + O(n \log n)$. 

The configuration ILP has recently been studied in the context of the (1-dimensional) cutting stock problem. In this case, the dimension $d$ is constant, $T=1$, and $\pi$ is a rational vector. Moreover,  \textcolor{red}{$\pi$} and $b$ are part of the input. \textcolor{red}{Goemans and Rothvo\ss~\cite{goemans_polynomiality_2014} obtain an optimal solution in time $\log(\Delta)^{2^{O(d)}}$, where $\Delta$ is the largest number appearing in the denominator of $\pi_k$ or the multiplicities $b_k$}. This is achieved by first showing that there exists a pre-computable set $\tilde{Q} \subseteq Q$ with polynomial many elements, such that there exists a solution $x$ that gives all but constant (depending only on $d$) amount of weight to $\tilde{Q}$. We remark that applying this result to a rounded instance of $P||C_{\max}$ yields a running time that is doubly exponential on~$1/\e$. 

\subsection{Our Contributions}

Our main contribution is a new insight on the structure of the solutions of [conf-IP]. These properties are specially tailored to problems in which $T$ is bounded by a constant, which in the case of $P||C_{\max}$ can be guaranteed by rounding and scaling. The same holds for $Q||C_{\max}$ with a more complex rounding and case analysis. 

We first classify configurations by their support. We say that a configuration is \emph{simple} if its support is of size at most $\log(T+1)$, otherwise it is \emph{complex}. Our main structural result\footnote{We remark the resemblance of this structure to the result by Goemans and Rothvo\ss~\cite{goemans_polynomiality_2014}. Indeed, similar to their result, we can precompute a subset of configurations such that all but a constant amount of weight of the solution is given to such set. In their case the set is of cardinality polynomial on the input and is constructed by covering the integral solutions of the knapsack polytope by parallelepipeds. In our case, all but $O(d\log dT)$ weight is given to simple configurations.}  states that there exists a solution $x$ in which all but $O(d\log (dT))$ weight is given to simple configurations, the support is bounded by $O(d\log(dT))$ (as implied by Eisenbrand and Shmonin~\cite{eisenbrand_caratheodory_2006}) and no complex configuration has weight larger than~1.

\begin{restatable}[Thin solutions]{theorem}{thin}
\label{thm:thin}
 Assume that [conf-IP] is feasible. Then there exists a feasible solution $x$ to [conf-IP] such that:
 \begin{enumerate}
  \item if $x_c>1$ then the configuration $c$ is simple,
  \item the support of $x$ satisfies $|\supp{x}|\le 4(d+1)\log(4(d+1)T)$, and
  \item $\sum_{c \in Q_c} x_c\le 2(d+1)\log(4(d+1)T)$, where $Q_c$ denotes the set of complex configurations.
 \end{enumerate}
\end{restatable}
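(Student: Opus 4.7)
The plan is to combine a pigeonhole-based exchange argument with the Eisenbrand--Shmonin sparsification theorem cited in the introduction.

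The combinatorial core is the following exchange lemma: if $c$ is complex, i.e., $|\supp{c}| > \log(T+1)$, then there exist distinct integer vectors $v_1, v_2$ with $0 \le v_1, v_2 \le c$ and $\pi \cdot v_1 = \pi \cdot v_2$. Indeed, the number of such vectors is $\prod_{k \in \supp{c}}(c_k+1) \ge 2^{|\supp{c}|} > T+1$, while their $\pi$-weights lie in $\{0,1,\ldots,T\}$. Setting $z := v_1 - v_2$ produces a nonzero integer vector with $\pi \cdot z = 0$, $\supp{z} \subseteq \supp{c}$, and $|z_k| \le c_k$ for every $k$, so both $c+z$ and $c-z$ are valid configurations, $(c+z)+(c-z) = 2c$, and $\supp{c \pm z} \subseteq \supp{c}$.

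To prove Property~1, I would work with a feasible solution minimizing a lexicographic objective whose primary term is the weighted complex mass $\Phi(x) := \sum_c x_c\,|\supp{c}|$ and whose secondary term is the support size. Assuming some complex $c$ has $x_c \ge 2$, the exchange $x_c \mapsto x_c - 2$, $x_{c+z}\mapsto x_{c+z}+1$, $x_{c-z}\mapsto x_{c-z}+1$ preserves the two equality constraints of [conf-IP] and cannot increase $\Phi$, since $\supp{c \pm z} \subseteq \supp{c}$. A case analysis — refining the pigeonhole so that the chosen pair $(v_1, v_2)$ differs at a coordinate where it spans the full range $\{0, c_k\}$, forcing a coordinate of $c+z$ or $c-z$ to vanish — yields a strict decrease in the lex-objective, contradicting minimality. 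Hence no complex configuration has multiplicity above~$1$.

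For Properties~2 and~3 I would invoke Eisenbrand and Shmonin: since [conf-IP] has $d+1$ equality rows and column entries bounded by $T$, any feasible integer instance admits a solution whose support is of size at most $2(d+1)\log(2(d+1)T)$. Applied separately to the subsystem of simple configurations (with the complex configurations frozen) and to the subsystem of complex configurations (whose multiplicities, by Property~1, live in $\{0,1\}$ and thus encode a set-selection problem amenable to the same sparsification), the two resulting support bounds sum to at most $4(d+1)\log(4(d+1)T)$, delivering Property~2, while the complex half alone yields the bound $2(d+1)\log(4(d+1)T)$ of Property~3. The main obstacle I anticipate is the delicate interplay between the two steps: a naive application of sparsification can re-introduce complex configurations with multiplicity above~$1$, so the cleanest execution is to phrase everything as a single joint lex-minimization whose minimizer simultaneously satisfies all three properties, with the exchange lemma precluding the bad configurations and Eisenbrand--Shmonin bounding the support of the selected simple and complex pieces. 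The subtlest technical point is handling the exchange in the degenerate regime where $c_k \ge 2$ for every $k \in \supp{c}$, where a pigeonhole on the richer set of integer vectors $0 \le v \le c$ (rather than on subsets of $\supp{c}$) provides the slack needed to still force a coordinate to zero.
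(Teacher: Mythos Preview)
Your overall architecture---potential minimization, a pigeonhole-based exchange, and Eisenbrand--Shmonin applied separately to the simple and complex pieces---matches the paper's. The gap is in the exchange lemma: as you state it, the pigeonhole over all integer vectors $0\le v\le c$ yields $z=v_1-v_2$ with $\supp{c\pm z}\subseteq\supp{c}$, but \emph{not} strict containment, so the exchange need not decrease $\Phi$. You recognize this and propose to ``refine the pigeonhole'' so that at some coordinate $(v_1)_k,(v_2)_k$ span $\{0,c_k\}$, and you flag the regime $c_k\ge 2$ everywhere as the hard case where the ``richer'' pigeonhole supposedly helps. This is where you go astray: the richer family does not obviously produce such a pair, and no argument is given.

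The paper sidesteps the entire case analysis by pigeonholing on the \emph{smaller} family of restriction vectors $c^S$, $S\subseteq\supp{c}$, defined by $(c^S)_k=c_k$ if $k\in S$ and $0$ otherwise. There are still $2^{|\supp c|}>T+1$ of them, so two distinct $c^S,c^R$ collide in $\pi$-weight; after removing the common part one may take $S,R$ disjoint and nonempty. Then $c_1:=c-c^S+c^R$ and $c_2:=c-c^R+c^S$ satisfy $2c=c_1+c_2$ and, crucially, $\supp{c_1}\subseteq\supp{c}\setminus S$ and $\supp{c_2}\subseteq\supp{c}\setminus R$, so \emph{both} supports shrink strictly---no case analysis, no trouble when $c_k\ge 2$. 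With this strict decrease in hand, the paper takes $\Phi(x)=\sum_{c\text{ complex}}x_c\,|\supp c|$ (summing only over complex configurations, which you should also do) and argues Property~3 directly: if a $\Phi$-minimizer had more than $2(d+1)\log(4(d+1)T)$ complex configurations in its support, Lemma~\ref{lm:2subsets} yields disjoint $A,B$ with $Mx^A=Mx^B$; either $\Phi(x^A)\ne\Phi(x^B)$ and one of $x\mp x^A\pm x^B$ beats the minimum, or $\Phi(x^A)=\Phi(x^B)$ and then $x-x^A+x^B$ has the same $\Phi$ but now some complex $c\in B$ has multiplicity~$2$, so the Sparsification Lemma strictly decreases $\Phi$---contradiction either way. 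This is the clean resolution of the ``interplay'' you worried about; your lex-minimization is unnecessary once the exchange is strict. Finally, Eisenbrand--Shmonin is applied only to the simple configurations (complex ones frozen), giving the remaining $2(d+1)\log(4(d+1)T)$ for Property~2.
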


We call a solution satisfying the properties of the theorem \emph{thin}. The theorem can be shown by iteratively applying a sparsification lemma that shows that if a solution gives a weight of two or more to a complex configuration, then we can replace this partial solution by two configurations with smaller support. The sparsification lemma is shown by a simple application of the pigeonhole principle. The theorem can be shown by mixing this technique with the theorem of Eisenbrand and Shmonin~\cite{eisenbrand_caratheodory_2006} and a potential function argument. 

As an application to our main structural theorem, we derive a PTAS for $P||C_{\max}$ by first guessing the jobs assigned to complex configurations. An optimal solution for this subinstance can be derived by a dynamic program. For the remaining instance we know the existence of a solution using only simple configurations. Then we can guess the support of such solution and solve the corresponding [conf-IP] restricted to the guessed variables. The main use of having simple configurations is that we can guess the support of the solution much faster, as the number of simple configuration is (asymptotically) smaller than the total number of configurations. The complete procedure takes time $2^{O((1/\e)\log^4(1/\e))} + O(n\log n)$. Moreover, using the rounding and case analysis of Jansen~\cite{jansen_eptas_2010}, we derive an mixed integer linear program that can be suitably decomposed in order to apply our structural result iteratively. This yields a PTAS \textcolor{red}{with a running time of $2^{O((1/\e)\log^4(1/\e))} + \text{poly}(n)$} for $Q||C_{\max}$. 

Similarly, we can extend our results to derive PTAS's for a larger family of objective functions as considered by Alon et al.~\cite{alon_approximation_1997,alon_approximation_1998}. Let $\ell_i$ denote the load of machine $i$, that is, the total processing time of jobs assigned to machine $i$ for a given solution. Our techniques then gives a PTAS with the same running time for the problem of minimizing the $L_p$-norms of the loads (for fixed $p$), and maximizing $\min_{i\in M} \ell_i$, among others. To solve this problem, we can round the instance and state an IP analogous to [conf-IP] but considering an objective function. However, the objective function prevents us to use the main theorem as it is stated. To get over this issue, we study several ILPs. In each ILP we consider $x_c$ to be a variable only if $c$ has a given load, and fix the rest to be some optimal solution. Applying to each such ILP Theorem~\ref{thm:thin}, plus some extra ideas, yields an analogous structural theorem. Afterwards, an algorithm similar to the one for makespan minimization yields the desired PTAS.

From an structural point of view, our sparsification lemma has other consequences on the structure of the knapsack polytope and the LP-relaxation of the [conf-IP]. More precisely, we can show that any vertex of the convex hull of $Q$ must be simple. This, for example, helps us to upper bound the number of vertices by $2^{O(\log^2(T) + \log^2(d))}$. Moreover, we can show that the configuration-LP, obtained by replacing the integrality restriction in [conf-IP] by $x\ge0$, if it is feasible then admits a solution whose support consist purely of simple configurations. 
Due to space limitations we leave many details and proofs to the appendix.

\section{Preliminaries}
\label{sec:preliminaries}

We will use the following notation throughout the paper. By default $\log(\cdot)=\log_2(\cdot)$, unless stated otherwise. Given two sets $A,I$, we will denote by $A^I$ the set of all vectors indexed by~$I$ with entries in $A$, that is, $A^I= \{(a_i)_{i\in I}\,:\, a_i\in A \text{ for all } i\in I\}$. Moreover, for $A\subseteq \mathbb{R}$, we denote the support of a vector $a\in A^I$ as $\text{supp}(a)=\{i\in I: a_i\neq 0\}$.

We consider an arbitrary knapsack polytope $\mathcal{P} = \{c\in \mathbb{R}^{d}_{\geq 0}: \pi\cdot c \le T\}$ where $\pi \in \mathbb{Z}^d_{>0}$ is a non-negative integral (row) vector and $T$ is a positive integer. We assume without loss of generality that each coordinate \(\pi_k\) of $\pi$ is upper bounded by $T$ (otherwise $c_k=0$ for all $c\in \mathbb{Z}^d \cap \mathcal{P}$). We focus on the set of integral vectors in $\mathcal{P}$ which we denote by $Q = \mathbb{Z}^d \cap \mathcal{P}$. We call an element $c\in Q$ a \emph{configuration}. Given $b\in \mathbb{R}^d$, consider the problem of decomposing $b$ as a conic integral combination of $m$ configurations. That is, our aim is to find a feasible solution to [conf-IP], defined above.






A crucial property of the [conf-IP] is that there is always a solution with a support of small cardinality. This follows from a Caratheodory-type bound obtained by Eisenbrand and Shmonin~\cite{eisenbrand_caratheodory_2006}. Since we will need the argument later, we state the result applied to our case and revise its (very elegant) proof. We split the proof in two lemmas. 

 For a given subset $A\subseteq Q$, let us denote by $x^A$ the indicator vector of $A$, that is $x^A_{c}=1$ if $c\in A$, and 0 otherwise. Let us also denote by $M$ the $(d+1)\times |Q|$ matrix that defines the system of equalities~\eqref{eq:ConicComb} and~\eqref{eq:NumberConf}.

\begin{lemma}[Eisenbrand and Shmonin~\cite{eisenbrand_caratheodory_2006}]
 \label{lm:2subsets}
 Let $x \in \mathbb{Z}_{\ge0}^{Q}$ be a vector such that $|\supp{x}| > 2(d+1)\log(4(d+1)T)$. Then there exist two disjoint sets $A,B$ with $\emptyset\neq A,B\subseteq \supp{x}$ such that $Mx^A=Mx^B$.
\end{lemma}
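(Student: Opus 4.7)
The plan is to prove this by a direct pigeonhole argument over all subsets of $\supp(x)$, using the fact that the map $S\mapsto Mx^S$ takes integer values in a small box.

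First I would set up the counting. For any $S\subseteq\supp(x)$, the vector $Mx^S\in\mathbb{Z}_{\ge0}^{d+1}$ records two kinds of information. The last row of $M$ comes from equation~\eqref{eq:NumberConf} and is the all-ones row, so the last coordinate of $Mx^S$ is exactly $|S|$. Each of the first $d$ coordinates is $\sum_{c\in S}c_k$; since every configuration $c\in Q$ satisfies $\pi_k c_k\le \pi\cdot c\le T$ and $\pi_k\ge 1$, we have $c_k\le T$, so this coordinate lies in $\{0,1,\dots,|S|T\}\subseteq\{0,1,\dots,nT\}$ where $n=|\supp(x)|$.

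Next I would apply the pigeonhole principle. There are $2^n$ subsets of $\supp(x)$, while the image set has cardinality at most $(nT+1)^d(n+1)$. A short calculation shows the assumption $n>2(d+1)\log(4(d+1)T)$ implies $2^n>(nT+1)^d(n+1)$, so two distinct subsets $S_1\neq S_2$ must satisfy $Mx^{S_1}=Mx^{S_2}$. Having obtained such $S_1,S_2$, I set $A=S_1\setminus S_2$ and $B=S_2\setminus S_1$. Linearity of $x\mapsto Mx$ gives $Mx^A-Mx^B=Mx^{S_1}-Mx^{S_2}=0$. The sets $A,B$ are disjoint by construction, and both are nonempty: the last coordinate of $Mx^{S_i}$ forces $|S_1|=|S_2|$, so $S_1\neq S_2$ prevents either one from containing the other.

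The only mildly delicate step is verifying the pigeonhole inequality $2^n>(nT+1)^d(n+1)$, equivalently $n>(d+1)\log(nT+1)+\log(n+1)$, from the hypothesis $n>2(d+1)\log(4(d+1)T)$. I would bound $\log(nT+1)\le 1+\log n+\log T$ and absorb $\log(n+1)$ into $(d+1)\log(nT+1)$, reducing the task to checking
\[
2(d+1)\log(4(d+1)T)\ge (d+1)\bigl(2+\log(d+1)+\log T+\log\log(4(d+1)T)\bigr),
\]
which after dividing by $(d+1)$ and using $\log(4(d+1)T)=2+\log(d+1)+\log T$ simplifies to the trivial inequality $\log y\ge \log\log y$ for $y=4(d+1)T\ge 4$. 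This estimate is the only quantitative content of the argument; the rest is just pigeonhole plus the observation that matching last-coordinates guarantees the differences $A$ and $B$ are both nonempty.
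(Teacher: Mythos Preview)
Your argument is correct and essentially identical to the paper's: both apply pigeonhole to the map $S\mapsto Mx^S$, pass to the symmetric differences $A=S_1\setminus S_2$, $B=S_2\setminus S_1$, and use the all-ones row of $M$ (forcing $|S_1|=|S_2|$) to conclude $A,B\neq\emptyset$. The only cosmetic differences are that you track the last coordinate separately (getting $(nT+1)^d(n+1)$ instead of the paper's coarser $(sT+1)^{d+1}$), and that you verify the counting inequality by evaluating at the threshold $n_0=2(d+1)\log(4(d+1)T)$ rather than, as the paper does, solving the hypothesis for $T$ and substituting.

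One small patch: your reduction ``check the inequality at $n=n_0$'' silently uses that $n\mapsto n-(d+1)\log(nT+1)$ is increasing for $n\ge n_0$; this holds since its derivative $1-\frac{(d+1)T}{(nT+1)\ln 2}$ is positive once $n\ge 4(d+1)\le n_0$, but you should say so. Also, the line ``equivalently $n>(d+1)\log(nT+1)+\log(n+1)$'' has a typo---taking logs of $(nT+1)^d(n+1)$ gives $d\log(nT+1)+\log(n+1)$; after absorbing $\log(n+1)\le\log(nT+1)$ you do land at $(d+1)\log(nT+1)$, which is the target the paper also uses.
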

\begin{proof}
 Let $s:=|\supp{x}|$. Each coordinate of $M$ is smaller than $T$. Hence, for any $A\subseteq \supp{x}$, each coordinate of $M x^A$ is no larger than $|A|\cdot T\le sT$. Thus, $Mx^A$ belongs to $\{0,\ldots,sT\}^{d+1}$, and hence there are at most $(sT+1)^{d+1}=2^{(d+1)\log(sT+1)}$ different possibilities for vector $Mx^A$, over all possible subsets $A\subseteq \supp{x}$. On the other hand, there are $2^{s}$ many different subsets of $\supp{x}$. 
 
 We claim that $s> (d+1)\log(sT+1)$. Indeed, 
 since $s > 2(d+1)\log (4(d+1)T)$ then $T<2^{\frac{s}{2(d+1)}}/(4(d+1))$. Hence,
 
 \begin{align*}
   (d+1)\log(sT+1) &< (d+1) \log\left(\frac{s2^{\frac{s}{2(d+1)}}}{4(d+1)} + 1 \right)\\
   & \le (d+1)\log\left( 2^{\frac{s}{2(d+1)}}\left(\frac{s}{4(d+1)}+1\right)\right)\\
   &= (d+1)\left(\frac{s}{2(d+1)} + \log\left(\frac{s}{4(d+1)}+1\right)\right)\\
   & \le \frac{s}{2} + \frac{s}{4\ln(2)} < s,
 \end{align*}
 where the penultimate inequality follows since $\log(x)\le (x-1)/\ln(2)$ for all $x\geq 1$.

 We obtain that $2^{s}> 2^{(d+1)\log(sT+1)}$. Hence, by the pigeonhole principle there are two distinct subsets $A',B' \subseteq \supp{x}$ such that $Mx^{A'} = Mx^{B'}$. We can now define $A = A'\setminus B'$ and $B = B'\setminus A'$ and obtain $Mx^A = Mx^B$. It remains to show that $A,B\neq \emptyset$. Notice that if $A=\emptyset$ then $A'\subseteq B'$, and the last equality of $Mx^{A'} = Mx^{B'}$ implies that $|A'|=|B'|$. This is a contradiction since then $A'=B'$. We conclude that $A\neq \emptyset$. The proof that $B\neq \emptyset$ is analogous.
\end{proof}

\begin{lemma}[Eisenbrand and Shmonin~\cite{eisenbrand_caratheodory_2006}]
\label{lm:ES}
If [conf-IP] is feasible, then there exists a feasible solution $x$ such that $|\supp{x}| \le 2(d+1)\log(4(d+1)T)$.
\end{lemma}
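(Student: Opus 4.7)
The plan is to take a feasible solution $x$ of minimum support and derive a contradiction if $|\supp(x)| > 2(d+1)\log(4(d+1)T)$ by constructing a feasible solution with strictly smaller support. This is the standard Carath\'eodory-style argument, and Lemma~\ref{lm:2subsets} has been designed precisely to drive the inductive step.

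More concretely, assume for contradiction that the minimum-support feasible $x$ has $|\supp(x)| > 2(d+1)\log(4(d+1)T)$. Apply Lemma~\ref{lm:2subsets} to obtain disjoint nonempty $A, B \subseteq \supp(x)$ with $Mx^A = Mx^B$. The key observation is that the vector $x^A - x^B \in \mathbb{Z}^Q$ lies in the kernel of $M$, so for every integer $\lambda$ the perturbed vector
\begin{equation*}
y \;=\; x + \lambda\bigl(x^A - x^B\bigr)
\end{equation*}
still satisfies $My = Mx = (b, m)^\top$, i.e.\ it obeys both~\eqref{eq:ConicComb} and~\eqref{eq:NumberConf} (note that equality on the last row of $M$ forces $|A| = |B|$, so the machine-count constraint is automatic). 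It remains only to pick $\lambda$ to preserve non-negativity and integrality while strictly shrinking the support.

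For this, set $\lambda := \min_{c \in B} x_c$. Since $B \subseteq \supp(x)$ and $x$ is integral, $\lambda$ is a positive integer. For $c \in A$ we have $y_c = x_c + \lambda \ge 1 > 0$; for $c \in B$ we have $y_c = x_c - \lambda \ge 0$ with equality at the minimizer; and for $c \notin A \cup B$ we have $y_c = x_c \ge 0$. Hence $y \in \mathbb{Z}_{\ge 0}^Q$, $y$ is feasible for [conf-IP], and $\supp(y) \subseteq \supp(x) \setminus \{c^*\}$ where $c^* \in B$ attains the minimum. This contradicts the minimality of $|\supp(x)|$ and completes the proof.

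There is no real obstacle here: the only subtlety is verifying that replacing $x$ by $y$ genuinely reduces the support (which requires $\lambda$ to be chosen so that some positive coordinate is killed) and that the machine-count equation~\eqref{eq:NumberConf} is preserved (which follows for free because Lemma~\ref{lm:2subsets} gives equality across all rows of $M$, not just the $d$ rows encoding~\eqref{eq:ConicComb}). If one preferred a more symmetric presentation, one could instead pick $\lambda \in \{-\min_{c \in A} x_c,\, +\min_{c \in B} x_c\}$ arbitrarily; either choice works.
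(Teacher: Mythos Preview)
Your proof is correct and follows essentially the same argument as the paper: take a minimum-support solution, invoke Lemma~\ref{lm:2subsets} to obtain disjoint $A,B\subseteq\supp{x}$ with $Mx^A=Mx^B$, and shift by an integer multiple of $x^A-x^B$ to kill a coordinate. The only cosmetic difference is that the paper chooses $\lambda=\min_{c\in A}x_c$ and subtracts on $A$ while you choose $\lambda=\min_{c\in B}x_c$ and subtract on $B$; as you yourself note, the roles of $A$ and $B$ are interchangeable.
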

\begin{proof}
 Let $x$ be a solution to [conf-IP] that minimizes $|\supp{x}| = s$. Assume by contradiction that $s > 2(d+1)\log(4(d+1)T)$. We show that we can find another solution $x'$ to [conf-IP] with $|\supp{x'}|<|\supp{x}|$, contradicting the minimality of $|\supp{x}|$. By Lemma~\ref{lm:2subsets}, there exist two disjoint subsets $A,B\in \supp{x}$ such that $Mx^A=Mx^B$. Moreover, let $\lambda = \min\{x_c: c\in A\}$. Vector $x':=x- \lambda x^A + \lambda x^B$ is also a solution to [conf-IP] and has a strictly smaller support since a configuration $c^* \in \arg\min\{x_c: c\in A\}$ satisfies $x'_{c^*}=0$.  
\end{proof}
\section{Structural Results}
 \label{sec:structural}
 
Recall that we call a configuration $c$ simple if $|\supp{c}|\le \log(T+1)$ and complex otherwise. An important observation to show Theorem~\ref{thm:thin} is that if $c$ is a complex configuration, then $2c$ can be written as the sum of two configurations of smaller support. This is shown by the following Sparsification Lemma.

\begin{lemma}[Sparsification Lemma] \label{lem-split}
 Let $c\in Q$ be a complex configuration. Then there exist two configurations $c_1,c_2\in Q$ such that
 \begin{enumerate}
  \item $\pi\cdot c_1 = \pi\cdot c_2 = \pi\cdot c$,
  \item $ 2c = c_1 + c_2$,
  \item $\supp{c_1}\subsetneq \supp{c}$ and $\supp{c_2}\subsetneq \supp{c}$.
 \end{enumerate}
 \end{lemma}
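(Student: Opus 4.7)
The plan is to prove the lemma by a single pigeonhole argument, carefully chosen so that the resulting splits $c_1,c_2$ actually shrink the support rather than merely preserve it. The naive attempt would look at the $2^{|\supp{c}|}$ subsets $A\subseteq \supp{c}$ and apply pigeonhole to the sums $\sum_{k\in A}\pi_k \in \{0,\ldots,T\}$, producing a $\pm 1$-valued perturbation $y$ with $\pi\cdot y=0$. Setting $c_1:=c+y$ and $c_2:=c-y$ satisfies the first two properties, but only guarantees $\supp{c_1},\supp{c_2}\subseteq\supp{c}$: a coordinate with $c_k\ge 2$ stays positive after an additive $\pm 1$ change, so strict inclusion may fail.

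The fix is to use \emph{all-or-nothing} perturbations. For each $A\subseteq\supp{c}$ define $z^A\in\mathbb{Z}^d$ by $z^A_k = c_k$ if $k\in A$ and $z^A_k=0$ otherwise. Since $c$ is complex, $|\supp{c}|>\log(T+1)$, hence there are more than $T+1$ such vectors. Each satisfies $\pi\cdot z^A\in\{0,1,\ldots,\pi\cdot c\}\subseteq\{0,1,\ldots,T\}$, so by pigeonhole two distinct subsets $A\ne B$ satisfy $\pi\cdot z^A = \pi\cdot z^B$. Replacing them by $A':=A\setminus B$ and $B':=B\setminus A$, we get disjoint subsets of $\supp{c}$ with $\pi\cdot z^{A'}=\pi\cdot z^{B'}$. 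Both $A'$ and $B'$ are non-empty: if, say, $A'=\emptyset$ then $\pi\cdot z^{B'} = 0$ as well, but $\pi_k c_k>0$ on $\supp{c}$ forces $B'=\emptyset$, contradicting $A\ne B$.

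Now set $y := z^{A'}-z^{B'}$ and define $c_1:=c+y$, $c_2:=c-y$. By construction $c_1+c_2=2c$ and $\pi\cdot c_1 = \pi\cdot c_2 = \pi\cdot c$, which yields properties 1 and 2 and also shows $c_1,c_2\in Q$, since the value $\pi\cdot c_i = \pi\cdot c\le T$ is preserved and non-negativity follows from $y_k\in\{-c_k,0,c_k\}$ (so $(c\pm y)_k \in\{0,c_k,2c_k\}$). For property 3, every $k\in A'$ has $(c_2)_k = c_k - c_k = 0$ and every $k\in B'$ has $(c_1)_k=0$; combined with $A',B'\ne\emptyset$ and $A',B'\subseteq\supp{c}$, we conclude $\supp{c_1} = \supp{c}\setminus B' \subsetneq \supp{c}$ and symmetrically $\supp{c_2} = \supp{c}\setminus A' \subsetneq \supp{c}$.

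The only delicate point is exactly the one highlighted in the first paragraph: choosing the right family on which to apply pigeonhole so that the resulting correction zeroes out entire coordinates rather than merely decrementing them by one. With the family $\{z^A : A\subseteq \supp{c}\}$ this is immediate, and the rest of the verification is routine.
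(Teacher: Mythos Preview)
Your proof is correct and follows essentially the same approach as the paper: both apply pigeonhole to the family $\{z^A : A\subseteq\supp{c}\}$ of ``all-or-nothing'' restrictions of $c$, pass to disjoint index sets, and define $c_1,c_2$ by the corresponding swap. Your writeup is in fact slightly more careful than the paper's, explicitly verifying that the disjoint sets $A',B'$ are non-empty and computing $\supp{c_i}$ exactly rather than just as a subset.
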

\begin{proof}
 
Consider for each subset $S\subseteq \supp{c}$, a configuration $c^S\in Q$ such that $c^S_i=c_i$ if $i\in S$ and $c^S=0$ otherwise. As the number of subsets of $\supp{c}$ is $2^{|\supp{c}|}$, and $c^R\neq c^S$ if and only if $R\neq S$, the collection of vectors $V:=\{c^S: S \subseteq \supp{c}\}$ has cardinality $|V| = 2^{|\supp{c}|}$.

On the other hand, for any  vector $c^S \in  V$ it holds that $\pi\cdot c^S\le \pi\cdot c \le T$. Hence, $\pi\cdot c^S\in\{0,1\ldots,T\}$ can take only $T+1$ different values. Using that $c$ is a complex configuration and hence $2^{|\supp{c}|} > 2^{\log (T+1)} = T+1$, the pigeonhole principle ensures that there are two different non-empty configurations $c^S,c^R \subseteq V$ with $\pi\cdot c^S = \pi\cdot c^{R}$. By removing the intersection, we can assume w.l.o.g. that $S$ and $R$ have no intersection. We define $c_1 = c - c^S + c^R$ and $c_2 = c - c^R + c^S$, which satisfy the properties of the lemma as
\begin{align*}
& \pi\cdot c_1 = \pi\cdot c - \pi\cdot c^S + \pi\cdot c^R = \pi\cdot c \quad \text{ and}\\
& 2c = c - c^S + c^R + c - c^R + c^S = c_1 + c_2.  
\end{align*}
Since $\supp{c_1}\subseteq \supp{c}\setminus S$ and $\supp{c_2} \subseteq \supp{c}\setminus R$, property 3 is satisfied.
\end{proof}

With Lemma \ref{lem-split} we are ready to show Theorem~\ref{thm:thin}. For the proof it is tempting to apply the lemma iteratively, replacing any complex configuration that is used twice by two configurations with smaller support. This can be repeated until there is no complex configuration taken multiple times. Then we can apply the technique of Lemma~\ref{lm:ES} to the obtained solution to bound the cardinality of the support. However, the last step might break the structure obtained if the solution implied by Lemma~\ref{lm:ES} uses a complex configuration more than once. In order to avoid this issue we consider a potential function. We show that a vector minimizing the chosen potential uses each complex configuration at most once, and that the number of complex configurations in the support is bounded. Finally, we apply the techniques from Lemma~\ref{lm:ES} restricted to variables corresponding to simple configurations. 


\begin{proof}[Proof of Theorem~\ref{thm:thin}]
Consider the following potential function of a solution $x \in \mathbb{Z}_{\ge 0}^Q$ of [conf-IP],
\begin{align*}
 \Phi(x) = \sum_{\text{complex config. } c} x_c |\supp{c}| .
\end{align*}
Let $x$ be a solution of [conf-IP] with minimum potential $\Phi(x)$, which is well defined since the set of feasible solutions has finite cardinality. We show two properties of $x$.

 \noindent {\bf P1:} $x_c\le1$ for each complex configuration $c\in Q$.
 
Assume otherwise. Consider the two configurations $c_1$ and $c_2$ implied by the previous lemma. We define a new solution $x'_{e} = x_{e}$ for $e\not\in \{c,c_1,c_2\}$, $x'_{c_1}= x_{c_1}+1$, $x'_{c_2} = x_{c_2}+1$ and $x'_c = x_c-2$. Since $|\supp{c_1}|<| \supp{c}|$ and $|\supp{c_2}|< |\supp{c}|$, we obtain that $\Phi(x') < \Phi(x)$ which contradicts the minimality of $\Phi(x)$.

\noindent {\bf P2:} The number of complex configurations in $\supp{x}$ is at most $2(d+1)\log(4(d+1)T)$.

Let $\tilde{x}$ be the vector defined as $\tilde{x}_c=x_c$ if $c\in Q$ is complex, and $\tilde{x}=0$ if $c\in Q$ is simple. Then Lemma~\ref{lm:2subsets} implies that there are exist two disjoint subsets $A,B\subseteq\supp{\tilde{x}}$ of complex configurations such that $Mx^A = Mx^B$. Thus, the solution $x' = x - x^A + x^B$ and the solution $x'' = x - x^B + x^A$ are feasible for [config-IP]. By linearity, the potential function on the new solutions are $\Phi(x') = \Phi(x) - \Phi(x^A) +  \Phi(x^B)$ or respectively  $\Phi(x'') = \Phi(x) -  \Phi(x^B) + \Phi(x^A)$. If $\Phi(x^A) > \Phi(x^B)$ or $\Phi(x^B) > \Phi(x^A)$ then we have constructed a new solution with smaller potential, contradicting our assumption on the minimality of $\Phi(x)$. We conclude that $\Phi(x^B) = \Phi(x^A)$ and thus $\Phi(x)=\Phi(x')$. By construction of $x'$, we obtain that $x_c' > x_c\ge 1$ for any complex configuration $c\in B$. Having multiplicity $\geq 2$ for a complex configuration $c$, we can proceed as in Case 1 to find a new solution with decreased potential, which yields a contradiction.
 
 Given these two properties, to conclude the theorem it suffices to upper bound the number of simple configurations by $2(d+1)\log(4(d+1)T)$. Suppose this property is violated, then we find two sets $A,B\subseteq \supp{x}$ of simple configurations (see Lemma \ref{lm:2subsets}) with $Mx^A = Mx^B$ and proceed as in Lemma~\ref{lm:ES}. Since Lemma~\ref{lm:ES} is only applied to simple configurations, properties P1 and P2 continue to hold and the theorem follows.
\end{proof}

 Our techniques, in particular our Sparsification Lemma, imply two corollaries on the structure of the knapsack polytope and the LP-relaxation implied by the [conf-IP].
 
 \begin{corollary} \label{cor:vertices}
  Every vertex of $\text{conv.hull}(Q)$ is a simple configuration. Moreover, the total number of simple configurations in $Q$ is upper bounded by $2^{O(\log^2(T) + \log^2(d))}$ and thus the same expression upper bounds the number of vertices of $\text{conv.hull}(Q)$.
 \end{corollary}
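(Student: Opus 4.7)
The plan is to split the corollary into two claims: (a) every vertex of $\text{conv.hull}(Q)$ is a simple configuration, and (b) the number of simple configurations in $Q$ is at most $2^{O(\log^2 T + \log^2 d)}$. The first will follow immediately from the Sparsification Lemma, and the second from an elementary enumeration.

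For (a), I would argue by contrapositive: any complex $c \in Q$ fails to be a vertex of $\text{conv.hull}(Q)$. The Sparsification Lemma supplies $c_1, c_2 \in Q$ with $2c = c_1 + c_2$ and $\supp{c_i} \subsetneq \supp{c}$ for $i=1,2$. The strict containment forces $c_i \neq c$: any coordinate in $\supp{c} \setminus \supp{c_i}$ is nonzero in $c$ but zero in $c_i$. Hence $c = \tfrac{1}{2}c_1 + \tfrac{1}{2}c_2$ exhibits $c$ as a proper convex combination of two distinct points of $Q \subseteq \text{conv.hull}(Q)$, so $c$ cannot be a vertex.

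For (b), I would enumerate simple configurations by first selecting a support $S \subseteq \{1,\dots,d\}$ with $|S| \le \log(T+1)$, and then selecting the values on $S$. Since $\pi \in \mathbb{Z}^d_{>0}$ and $\pi \cdot c \le T$, every nonzero coordinate of $c$ lies in $\{1, \dots, T\}$, so a given support yields at most $T^{|S|}$ configurations. Using $\binom{d}{k} \le d^k$, the total count is bounded by
\begin{align*}
\sum_{k=0}^{\lfloor \log(T+1) \rfloor} \binom{d}{k}\, T^k \;\le\; (\log(T+1)+1)\,(dT)^{\log(T+1)}.
\end{align*}
Rewriting $(dT)^{\log(T+1)} = 2^{\log(T+1)(\log d + \log T)}$ and invoking the AM-GM bound $\log(T+1)\log d \le \tfrac{1}{2}(\log^2(T+1) + \log^2 d)$ yields the desired $2^{O(\log^2 T + \log^2 d)}$ estimate, absorbing the polynomial prefactor. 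Combined with (a), this same expression upper bounds the number of vertices of $\text{conv.hull}(Q)$.

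I do not anticipate a substantive obstacle: (a) is a one-line consequence of a lemma already in hand, and (b) is routine counting. The only mild subtlety is the log-manipulation that converts the product $\log d \cdot \log T$ into the sum $\log^2 d + \log^2 T$ inside the exponent, which is immediate from AM-GM.
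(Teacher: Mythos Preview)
Your proposal is correct and follows essentially the same approach as the paper: part (a) is derived identically from the Sparsification Lemma, and part (b) uses the same support-then-values enumeration, with your explicit AM--GM step making precise a bound the paper states without justification.
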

\begin{proof}
Consider a complex configuration $c \in Q$. By Lemma \ref{lem-split} we know that there exist $c_1, c_2 \in Q$ with $c_1, c_2 \neq c$ such that $2c = c_1 + c_2$. Hence, $c$ is not a vertex of $Q$ as it can be written as a convex combination $c = c_1/2 + c_2/2$.
 
To bound the number of simple configurations fix a set $D\subseteq \{1,\ldots,d\}$. Notice that the number of configurations $c$ with $\supp{c}=D$ is at most $T^{|D|}$. For simple configurations it suffices to take $D$ with cardinality  at most $\log(T+1)$. Since the number of subsets $D\subseteq \{1,\ldots,d\}$ with cardinality $i$ is ${d \choose i}$, we obtain that the number of simple configurations is at most
\begin{align*}
&\sum_{i=0}^{\lfloor \log(T+1) \rfloor}{d \choose i} \times(T+1)^{ \log(T+1)}
\leq (\log (T+1)+1) d^{\log (T+1)} \times (T+1)^{\log(T+1)} \\
&= 2^{\log(\log(T+1)+1) + \log (d) \log (T+1)} \times 2^{\log (T+1)\log (T+
1)} = 2^{O(\log^2(d)+ \log^2(T))}. \qedhere
\end{align*}
\end{proof}
 
The following corollary follows as each complex configuration can be represented by a convex combination of simple configurations.
\begin{corollary} \label{cor:vertices2}
  Let [conf-LP] be the LP relaxation of [conf-IP], obtained by changing each constraint $x_c\in \mathbb{Z}_{\ge 0}$ to $x_c\ge 0$ for all $c\in Q$. If the LP is feasible then there exists a solution $x$ such that each configuration $c\in \supp{x}$ is simple.
 \end{corollary}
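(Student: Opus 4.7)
The plan is to start from an arbitrary feasible solution $x$ of [conf-LP] and repeatedly apply the Sparsification Lemma (Lemma~\ref{lem-split}) to eliminate weight on complex configurations. Concretely, suppose $c \in \supp{x}$ is complex, and let $c_1, c_2 \in Q$ be the configurations given by Lemma~\ref{lem-split}, so that $2c = c_1+c_2$, $\pi \cdot c_1 = \pi \cdot c_2 = \pi \cdot c$, and $|\supp{c_i}| < |\supp{c}|$. I will construct a new LP-feasible vector $x'$ by moving all of the weight on $c$ equally to $c_1$ and $c_2$: set $x'_c = 0$, add $x_c/2$ to the existing values $x_{c_1}$ and $x_{c_2}$, and leave all other coordinates unchanged.

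The next step is to verify that $x'$ is still feasible for [conf-LP]. Non-negativity is immediate. For the equality constraints, the total multiplicity is preserved because $\sum_{c'} x'_{c'} - \sum_{c'} x_{c'} = -x_c + x_c/2 + x_c/2 = 0$. The equality $\sum_{c'} c' x'_{c'} = b$ is preserved because the modification contributes $-c\cdot x_c + (x_c/2) c_1 + (x_c/2) c_2 = (x_c/2)(c_1+c_2 - 2c) = 0$, using the defining identity of the Sparsification Lemma. Hence, $x'$ is again a feasible solution of [conf-LP], and by construction the new configurations introduced into the support have strictly smaller support than $c$.

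To finish, I iterate this substitution, and I argue termination by downward induction on $M(x) := \max\{|\supp{c}| : c \in \supp{x}\}$. If $M(x) \le \log(T+1)$, every configuration in the support is simple and we are done. Otherwise, I apply the substitution above to every complex $c \in \supp{x}$ with $|\supp{c}| = M(x)$. Since each such $c$ is replaced by configurations of strictly smaller support, after processing all of them no configuration of support $M(x)$ remains, so $M$ decreases by at least one. Because $M$ takes integer values in $\{0, 1, \dots, d\}$, the process halts after finitely many rounds with an LP solution whose support consists entirely of simple configurations.

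The only delicate point is to make sure the iteration really does converge in finitely many steps and does not, for instance, create new complex configurations with larger support than those we eliminated. This is the reason for processing the configurations by descending support size: the Sparsification Lemma guarantees $|\supp{c_i}| < |\supp{c}|$, so new complex configurations produced in a given round automatically have support size strictly smaller than the current maximum, and the monotone decrease of $M$ gives termination without any delicate potential-function argument.
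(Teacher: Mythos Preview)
Your proof is correct. It differs from the paper's argument in one respect worth noting: the paper invokes Corollary~\ref{cor:vertices}, which says every vertex of $\text{conv.hull}(Q)$ is simple, and hence every complex $c\in Q$ can be written in one shot as a convex combination of \emph{simple} configurations; substituting that combination removes $c$ from the support without introducing any new complex configurations, so termination is immediate (one iteration per complex configuration in the initial support). You instead apply Lemma~\ref{lem-split} directly and allow the replacement to introduce new complex configurations of strictly smaller support, controlling termination via the monotone potential $M(x)$. Your route is slightly more elementary in that it avoids the convex-hull vertex characterization, at the cost of needing the layered termination argument; the paper's route trades that for the extra structural input of Corollary~\ref{cor:vertices}. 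Both are clean and short.
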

\begin{proof}
Consider a solution $x$ of [conf-LP]. Assume that there exists $c \in Q$ such that $c$ is complex and $x_c > 0$. Then by the previous corollary, configuration $c$ can be written as $c = \sum_{q\in Q}\lambda_q q$, where  $\sum_{q\in Q} \lambda_q=1$,  $\lambda_q\ge 0$ for all $q\in Q$, and $\lambda_q=0$ if $q\in Q$ is complex. Consider a new solution $x'$ defined as 
\[
 x'_{q} = \begin{cases}
          0 &\text{if } q = c,\\
         x_{q} + \lambda_q\cdot x_c & \text{if } q\neq c.
        \end{cases}
\]

This new solution is also feasible for [conf-LP]. As $x'_c =0$, the number of complex configurations in the support of the solution is reduced by $1$.
This procedure can be repeated until we have a solution $\hat{x}$ whose support contains only simple configurations.
\end{proof}

\section{Applications to Scheduling on Parallel Machines}
\label{sec:PCmax}

In what follows we show how to exploit the structural insights of the previous section to derive faster algorithms for parallel machines scheduling problems. We start by considering $P||C_{\max}$, where we seek to assign a set of jobs $\mathcal{J}$ with processing times $p_j\in \mathbb{Z}_{>0}$ to a set $\mathcal{M}$ of $m$ machines. For a given assignment $a:\mathcal{J} \mapsto \mathcal{M}$, we define the load of a machine $i$ as $\sum_{j: a(j)=i} p_j$ and the \emph{makespan} as the maximum load of jobs over all machines, which  is the minimum time needed to complete the execution of all jobs on the processors. The goal is to find an assignment $\mathcal{J} \mapsto \mathcal{M}$ that minimizes the makespan.

We first follow well known rounding techniques~\cite{alon_approximation_1997,alon_approximation_1998,hochbaum_using_1987,hochbaum_approximation_1997}. Consider an error tolerance $0<\e<1/3$ such that $1/\e^2$ is an integer. To get an estimation of the optimal makespan, we follow the standard dual approximation approach. First, we can use, e.g., the 2-approximation algorithm by Graham~\cite{graham_bounds_1966} to get an initial guess of the optimal makespan. Using binary search, we can then  estimate the optimal makespan within a factor of $(1+ \e)$ in $O(\log(1 / \e))$ iterations. Therefore, it remains to give an algorithm that decides for a given makespan $T$, if there exists an assignment with makespan $(1+ O(\e))T$ or reports that there exists no assignment with makespan $ \leq T$.

For a given makespan $T$ we define the set of big jobs $\Jbig = \{ j \in \mathcal{J} : p_j \geq \e T\}$ and the set of small jobs $\Jsmall = \mathcal{J} \setminus \Jbig$. The following lemma shows that small jobs can be replaced from the instance by adding big jobs, each of size $\e T$, as placeholders. Let $S$ be the sum of processing times of jobs in $\Jsmall$ and let $S^*$ denote the next value of $S$ rounded up to the next multiple of $\e T$, that is, $S^* = \e T \cdot \lceil S/(\e T)\rceil$. We define a new instance containing only big jobs by $\Jround = \Jbig \cup \Jnew$, where $\Jnew$ contains $S^* / (\e T)\in\mathbb{N}$ jobs of size~$\e T$.

\begin{lemma}\label{lem:rounding1}
Given a feasible assignment $a: \mathcal{J} \mapsto \mathcal{M}$ of jobs with makespan $T$. Then there exists a feasible assignment $a_B: \Jround \mapsto \mathcal{M}$ of makespan $T^* \leq (1+\e)T$. Similarly, an assignment of jobs in $\mathcal{J}^*$ of makespan $T^*$ can be transformed to an assignment of $\mathcal{J}$ of makespan at most $(1+\e)T^*$.
\end{lemma}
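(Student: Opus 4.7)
The plan is to prove both directions by a single big-job-preserving construction: keep every job of $\Jbig$ on the machine where the input assignment places it, and then swap the small-job contents for placeholder contents (or vice versa), buying one $\e T$ of slack per machine to absorb the rounding mismatch.

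For the forward direction, write $B_i$ and $s_i$ for the loads placed on machine $i$ by $a$ from $\Jbig$ and $\Jsmall$ respectively, so $B_i+s_i\le T$. I would distribute the placeholders using the running totals $S_i := s_1+\cdots+s_i$ (with $S_0=0$): put $k_i := \lceil S_i/(\e T)\rceil - \lceil S_{i-1}/(\e T)\rceil$ placeholders on machine $i$. The sum telescopes to $\lceil S/(\e T)\rceil = S^*/(\e T)$, so all placeholders of $\Jnew$ are used. From the inequality $\lceil S_i/(\e T)\rceil \le \lceil S_{i-1}/(\e T)\rceil + \lceil s_i/(\e T)\rceil$ one obtains $k_i\e T \le \lceil s_i/(\e T)\rceil\e T \le s_i+\e T$, and hence the load on machine $i$ in $a_B$ is at most $B_i+s_i+\e T \le (1+\e)T$.

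For the reverse direction, let $B_i'$ and $k_i'$ denote the big-job load and placeholder count on machine $i$ under $a_B$, so $B_i'+k_i'\e T\le T^*$ and $\sum_i k_i' = S^*/(\e T)$. Keep the big jobs in place and repack the small jobs by a greedy loop: scan $\Jsmall$ in any order and place each $j$ on any machine $i$ whose current small-job load $c_i$ satisfies $c_i+p_j \le (k_i'+1)\e T$. This greedy can never stall: otherwise every machine would satisfy $c_i > (k_i'+1)\e T - p_j \ge k_i'\e T$ (using $p_j\le \e T$), giving $\sum_i c_i > S^* \ge S$, which contradicts that the placed jobs form a proper subset of $\Jsmall$. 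Each machine's final load is therefore at most $B_i'+(k_i'+1)\e T \le T^* + \e T \le (1+\e)T^*$, the last inequality holding in the regime $T\le T^*$ in which the lemma is invoked.

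The only delicate point is the allocation in direction~1: the integers $(k_i)$ must sum \emph{exactly} to $S^*/(\e T)$ and yet satisfy $k_i\e T \le s_i+\e T$ per machine, which the naive choice $k_i=\lceil s_i/(\e T)\rceil$ fails because it oversums; the telescoping-ceiling formula above is the clean remedy, producing the correct global count without leaving stray placeholders to place \emph{ad hoc}. Direction~2 is then comparatively routine once one observes that $S^*\ge S$ together with $p_j\le \e T$ is precisely what prevents the greedy from failing.
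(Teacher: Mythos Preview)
Your proof is correct and follows essentially the same approach as the paper: keep big jobs in place and swap small jobs for placeholders (or back), absorbing one $\e T$ of slack per machine. The paper's forward direction is slightly looser---it simply allocates \emph{at most} $\lceil s_i/(\e T)\rceil$ placeholders to machine $i$ and observes that $\sum_i \lceil s_i/(\e T)\rceil \ge \lceil S/(\e T)\rceil = S^*/(\e T)$, so all placeholders fit---whereas your telescoping-ceiling formula hits the count exactly; and for the reverse direction the paper just says ``add the small items greedily into the space of the placeholder jobs,'' which your argument makes precise. Your caveat that the final bound $T^*+\e T\le(1+\e)T^*$ needs $T\le T^*$ is accurate and indeed holds in the regime where the lemma is applied (the rounded instance is solved with target $\bar{T}\ge T$).
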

\begin{proof}
We modify the assignment $a$ of jobs in $\mathcal{J}$ by replacing the set of small jobs on each machine by jobs in~$\Jnew$. Let~$S_i$ be the total processing time of small jobs assigned to machine~$i$. Then the small jobs are replaced by (at most) $S_{i}^* / (\e T)$ jobs in $\Jnew$, where $S_{i}^*$ denotes the value of $S_i$ rounded up to the next multiple of $\e T$. As $\sum \frac{S_{i}^*}{\e T} \geq \lfloor \sum \frac{S_i}{\e T} \rfloor = \frac{S^*}{\e T}$, the new solution processes all jobs in $\Jnew$ and the load on each machine increases hence by at most $\e T$.
Having an assignment for the big jobs $\mathcal{J}^*$, we can easily obtain a schedule for jobs $\mathcal{J}$, by adding the small items greedily into the space of the placeholder jobs $\Jnew$.
\end{proof}


By scaling the processing times of jobs in $\Jround$, we can assume that the makespan $T$ has value~$1/\e^2$. Also notice that we can assume that $p_j\le T$ for all $j$, otherwise we cannot pack all jobs within makespan $T$. This implies that each job $j \in \Jround$ has a processing time of $1/\e \le p_j \le 1/\e^2$. In the following we give a transformation of big jobs in $\Jround$ by rounding their processing times. We first round the jobs to the next power of $1+\e$ as $p_j'= (1+\e)^{\lceil \log_{(1+\e)}p_j\rceil}$, and thus all rounded processing times belong to $\Pi'=\{(1+\e)^{k}\,:\, 1/\e \le (1+\e)^k \le (1+\e)/\e^2 \text{ and } k\in \mathbb{N}\}$. We further round processing times $p_j'$ to the next integer $\bar{p}_j=\lceil p_j'\rceil$ and define a new set $\Pi= \{\lceil p \rceil\,:\, p\in \Pi'\}$. Notice that $\Pi$ only contains integers and $|\Pi| \le |\Pi'| \in O((1/\e)\log (1/\e))$. 

\begin{lemma}
\label{lm:rounding}
If there is a feasible schedule of jobs $\Jround$ with processing times $p_j$ onto $m$ machines with makespan $T^*\le (1+\e)T$, then there is also a feasible schedule of jobs $\Jround$ with rounded processing $\bar{p}_j$ with a makespan of at most $(1+5\e)T$. Furthermore, the number of different processing times is at most $|\Pi| \in O((1/\e)\log (1/\e))$.
\end{lemma}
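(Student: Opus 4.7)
The plan is to analyze the two rounding steps separately, bounding the additive loss on each machine, and then use the fact that jobs in $\Jround$ are large (in the scaled instance) to control the number of jobs per machine.

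First, observe that rounding $p_j$ up to the next power of $(1+\e)$, obtaining $p_j'$, blows up each processing time by a factor of at most $(1+\e)$. Hence a feasible schedule for the $p_j$'s with makespan $T^*\le(1+\e)T$ immediately gives a feasible schedule with processing times $p_j'$ of makespan at most $(1+\e)T^*\le(1+\e)^2 T$.

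Second, turning $p_j'$ into $\bar p_j=\lceil p_j'\rceil$ increases the size of each job by strictly less than $1$, so the load of each machine grows by at most the number of jobs assigned to it. Since every job in $\Jround$ satisfies $p_j\ge\e T$, and since we have scaled so that $T=1/\e^2$, each $p_j\ge 1/\e$; therefore each machine receives at most $(1+\e)^2 T/(1/\e)=(1+\e)^2/\e$ jobs. The overall makespan after both rounding steps is thus at most
\begin{align*}
(1+\e)^2 T+(1+\e)^2/\e \;=\;(1+\e)^2\bigl(T+1/\e\bigr)\;=\;(1+\e)^2\cdot\frac{1+\e}{\e^2}\;=\;(1+\e)^3 T.
\end{align*}
A direct expansion shows that for $\e\le1/3$ we have $(1+\e)^3=1+3\e+3\e^2+\e^3\le 1+5\e$, which yields the desired makespan bound of $(1+5\e)T$.

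For the bound on $|\Pi|$, note that $|\Pi|\le|\Pi'|$, and $\Pi'$ consists of powers $(1+\e)^k$ with $1/\e\le(1+\e)^k\le(1+\e)/\e^2$. Taking logarithms, the admissible range for $k$ has length at most $1+\log_{1+\e}(1/\e)$, and since $\log(1+\e)\ge\e/2$ for $\e\le1/3$, this is $O((1/\e)\log(1/\e))$. The main (very minor) obstacle is simply tracking the two multiplicative factors of $(1+\e)$ together with the additive $+1$'s coming from the ceiling, and verifying that the parameter choice $T=1/\e^2$ absorbs them into a single $(1+5\e)$ factor; after that, everything is routine.
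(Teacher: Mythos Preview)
Your proof is correct and follows essentially the same approach as the paper: bound the makespan after the first rounding by $(1+\e)^2T$, use the lower bound $p_j\ge 1/\e$ to cap the number of jobs per machine, and then absorb the at-most-$1$ additive increase from the ceiling. The only cosmetic difference is that you factor the final bound cleanly as $(1+\e)^3T\le(1+5\e)T$, whereas the paper writes out the explicit numerics $1/\e^2+3/\e+3<1/\e^2+5/\e$.
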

\begin{proof}
Consider a feasible schedule of jobs in $\Jround$ with processing times $p_j$ onto $m$ machines with makespan $T^*$. Let $J_{i_1}, \ldots , J_{i_r}$ be the set of jobs processed on machine $i$ i.e. $a(J_{i_k}) = i$ for $k = 1, \ldots , r$. Then $\sum_{j=1}^r p'_j \leq \sum_{j=1}^r (1+ \e) p_j \leq (1+\e) T^*$.
Hence, the same assignment $a$ with processing times $p'_j$ yields a makespan of at most $(1+\e)T^*\le (1+\e)^2T = 1/\e^2 + 2/\e +1$. Since $p'_j \geq p_j \geq 1/\e$, on every machine are at most $1/\e + 2$ jobs. Hence, rounding the processing times $p'_j$ of each job to the next integer increase the load on each machine by at most $1/\e+2$. Recalling that $\e <1/3$, we obtain a feasible schedule with makespan at most $(1+\e)T^*  + 1/\e+2 \le 1/\e^2 + 3/\e + 3 < T + 5\e T$.
\end{proof}

In what follows we give an algorithm that decides in polynomial time the existence of a solution for instance $\Jround$ with processing times $\bar{p}_j$ and makespan $\bar{T}=\lfloor (1+5\e)T\rfloor$. We call numbers in $\Pi$ by $\pi_1,\ldots , \pi_d$ and define the vector $\pi =(\pi_1,\pi_2,\ldots,\pi_d)\in \mathbb{N}^d$ of rounded processing times. We consider \emph{configurations} to be vectors in $Q= \mathcal{P} \cap \mathbb{Z}^d$, where $\mathcal{P} = \{ c \in \mathbb{R}^{d}_{\geq 0}: \pi \cdot c \le \bar{T} \}$ is a knapsack polytope (see Section~\ref{sec:structural}). As before, we say that a configuration is simple if $|\supp{c}|\le \log (\bar{T}+1)$, and complex otherwise. For a given assignment of jobs to machines, we say that a machine follows a configuration $c$ if $c_k$ is the number of jobs of size $\pi_k$ assigned to the machine. We denote by $Q_c \subseteq Q$ the set of complex configurations and by  $Q_s \subseteq Q$ the set of simple configurations.

Let $b_k$ be the number of jobs of size $\pi_k$ in the instance $\Jround$ (with processing times $\bar{p}$). Consider an ILP with integer variables $x_c$ for each $c\in Q$, which denote the number of machines that follow configuration $c$. With these parameters the problem of scheduling all jobs in a solution of makespan $\bar{T}$ is equivalent to finding a solution to [conf-IP]. To solve the ILP we use, among other techniques, Kannan's algorithm~\cite{kannan_minkowskis_1987} which is an improvement on the algorithm by Lenstra~\cite{lenstra_integer_1983}. The algorithm has a running time of $2^{O(N\log N)}s$ where $N$ is the number of variables and $s$ is number of bits used to encode the input of the ILP in binary. 

By Theorem~\ref{thm:thin}, if [conf-IP] is feasible then there exists a thin solution. In particular if one configuration $c$ is used by more than one machine then $c$ is simple, and the total number of used configurations is $4(d+1)\log(4(d+1)\bar{T})\in O((1/\e)\log^2 (1/\e))$. Additionally, the number of machines following a complex configurations is at most $2(d+1)\log(4(d+1)\bar{T})\in O((1/\e)\log^2 (1/\e))$. We consider the following strategy to decide the existence of a schedule of makespan $\bar{T}$.

\begin{algorithm}
\label{alg:PTASparallelCmax}
\hspace{4cm}\begin{enumerate}
\item For each processing time $\pi_k$, guess the number $b^c_k \leq b_k$ of jobs covered by complex configurations.
\item Find a minimum number of machines $m^c$ to schedule jobs $b^c$ with makespan $\bar{T}$.
\item Guess the support of simple configurations ${\bar Q}_s \subseteq Q_s$ used by a thin solution, with $|{\bar Q}_s| \leq 4(d+1)\log(4(d+1)\bar{T})\in O((1/\e)\log^2 (1/\e))$.
\item Solve the ILP restricted to configurations in $\bar{Q}_s$:
\begin{align*}
 & \sum_{c\in \bar{Q}_s} c\cdot x_c =b-b^c,\\
 & \sum_{c\in {\bar Q}_s} x_c = m-m^c,\\
 &x_c \in \mathbb{Z}_{\ge 0} & \text{ for all }c\in \bar{Q}_s. 
\end{align*}
\end{enumerate}
\end{algorithm}

One of the key observations to prove the running time of the algorithm is that the number of simple configurations $|Q_s |$ is bounded by a quasi polynomial term:
\[
	|Q_s| \le 2^{O(\log^2 (1/\e))}.
\]
This follows easily by Corollary \ref{cor:vertices}, using that $|\bar{T}| \in O(1/\e^2)$ and $d = |\Pi| \in O((1/\e)\log (1/\e))$.

\begin{lemma} \label{lm:runningtime}
Algorithm~\ref{alg:PTASparallelCmax} can be implemented with a running time of $2^{O((1/\e)\log^4(1/\e))} \log(n)$.
\end{lemma}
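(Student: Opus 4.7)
I will bound the cost of each of the four steps of Algorithm~\ref{alg:PTASparallelCmax} separately and multiply. Throughout I will use that $d = |\Pi| = O((1/\e)\log(1/\e))$ and $\bar{T} = O(1/\e^2)$ (so Corollary~\ref{cor:vertices} yields $|Q_s| \le 2^{O(\log^2(1/\e))}$), and that after the rounding of Lemma~\ref{lm:rounding} every rounded processing time satisfies $\bar{p}_j \ge \lceil 1/\e \rceil$, so that no configuration can contain more than $\bar{T}/\bar{p}_j = O(1/\e)$ jobs.

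\textbf{Steps 1 and 2.} By Theorem~\ref{thm:thin} applied to a thin solution, the number of machines following complex configurations is at most $M_c := 2(d+1)\log(4(d+1)\bar{T}) = O((1/\e)\log^2(1/\e))$; combined with the bound above on the number of jobs per machine, the total number of jobs covered by complex configurations is $N_c := \sum_k b^c_k \le M_c\cdot O(1/\e) = O((1/\e)^2\log^2(1/\e))$. Hence each $b^c_k$ takes one of at most $N_c+1$ values, and step~1 needs at most $(N_c+1)^d = 2^{O((1/\e)\log^2(1/\e))}$ guesses. For step~2 I use a dynamic program whose states are residual demand vectors $0\le b''\le b^c$ and whose value $f(b'')$ is the minimum number of machines required to cover $b''$ using only complex configurations, with transitions subtracting one complex configuration at a time. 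The state space has cardinality $\prod_k(b^c_k+1) \le 2^{O((1/\e)\log^2(1/\e))}$, and for each state I enumerate $|Q_c| \le |Q| \le (O(1/\e))^d = 2^{O((1/\e)\log^2(1/\e))}$ configurations, so step~2 runs in $2^{O((1/\e)\log^2(1/\e))}$ time. Taking the \emph{minimum} $m^c$ is compatible with any thin solution because the zero vector is a simple configuration and can be used to pad the simple part of step~4 up to exactly $m - m^c$ machines.

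\textbf{Steps 3 and 4.} By Theorem~\ref{thm:thin} the simple part of a thin solution has support of size at most $4(d+1)\log(4(d+1)\bar{T}) = O((1/\e)\log^2(1/\e))$, so the number of candidate sets $\bar{Q}_s \subseteq Q_s$ in step~3 is at most
\[
 |Q_s|^{O((1/\e)\log^2(1/\e))} \;=\; 2^{O(\log^2(1/\e))\cdot O((1/\e)\log^2(1/\e))} \;=\; 2^{O((1/\e)\log^4(1/\e))},
\]
which will dominate the overall running time. For step~4 I apply Kannan's algorithm~\cite{kannan_minkowskis_1987} to the restricted ILP with $N := |\bar{Q}_s| = O((1/\e)\log^2(1/\e))$ integer variables, coefficients of magnitude $O(1/\e)$, and right-hand sides bounded by $n$; its binary encoding size is $O(Nd\log(1/\e) + d\log n) = \text{poly}(1/\e)\cdot\log n$, so Kannan's running time is $2^{O(N\log N)} s = 2^{O((1/\e)\log^3(1/\e))}\log n$. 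Multiplying the four bounds and absorbing subexponential factors into the leading exponent yields the claimed $2^{O((1/\e)\log^4(1/\e))}\log n$. The hard part of the argument is verifying the step-2 dynamic program; the decisive quantitative input is Corollary~\ref{cor:vertices}, since without the quasi-polynomial bound on $|Q_s|$ step~3 alone would cost $|Q|^{O((1/\e)\log^2(1/\e))} = 2^{O((1/\e)^2\log^4(1/\e))}$ and the target running time would be out of reach.
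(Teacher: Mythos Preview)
Your proof is correct and follows essentially the same approach as the paper: bound the number of jobs on complex-configuration machines by $O((1/\e)^2\log^2(1/\e))$, enumerate the vectors $b^c$ in $2^{O((1/\e)\log^2(1/\e))}$ time, solve step~2 by dynamic programming in $2^{O((1/\e)\log^2(1/\e))}$ time, enumerate the simple support in step~3 in $2^{O((1/\e)\log^4(1/\e))}$ time via Corollary~\ref{cor:vertices}, and solve step~4 with Kannan's algorithm in $2^{O((1/\e)\log^3(1/\e))}\log n$ time. The only cosmetic differences are that the paper's DP for step~2 is indexed by machines (states $(\ell,z_1,\dots,z_d)$) rather than by residual demand, and does not restrict transitions to complex configurations; your explicit remark about padding with the zero configuration is a nice clarification that the paper leaves implicit.
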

\begin{proof}
In step 1, the algorithm guesses which jobs are processed on machines following a complex configurations. Since each configuration contains at most $O(1/\e)$ jobs, there are at most $O(m^c/\e)=O((1/\e^2)\log^2 (1/\e))$ jobs assigned to such machines. For each size $\pi_k \in \Pi$, we guess the number $b_k^c$ of jobs of size $\pi_k$ assigned to such machines. Hence, we can enumerate all possibilities for jobs assigned to complex machines in time $2^{O((1/\e)\log^2 (1/\e))}$. After guessing the jobs, we can assign them to a minimum number of machines in step 2 (with makespan $\bar{T}$) with a simple dynamic program that stores vectors $(\ell,z_1,\ldots,z_d)$ with $z_k\le b_k^c$ being the number of jobs of size $\pi_k$ used in the first $\ell\le m^c$ processors~\cite{jansen_scheduling_2011}. \textcolor{red}{The size of the dynamic programming table is $O(m^c\prod_{k=1}^d (b_k^c+1))$. For any vector $(\ell,z_1,\ldots,z_d)$, determining whether it corresponds to a feasible solution can be done by checking all vectors of the type $(\ell-1,z_1',\ldots,z_d')$ for $z_k'\le z_k$. Thus, the running time of the dynamic program is $O(m^c [\prod_{k=1}^d (b_k^c+1)]^2)$.} Since $b_k^c \in O((1/\e^2)\log^2 (1/\e))$ for each $k$, recalling that $m^c\in O((1/\e)\log^2 (1/\e))$, and that $d=|\Pi| \in O((1/\e) \log(1/\e))$, we obtain that step 2 can be implemented with $2^{O((1/\e)\log^2 (1/\e))}$ running time.
%



In step 3, our algorithm guesses the support of a thin solution $x$. Recall that if $x$ is thin then $|\supp{x}|\le 4(d+1)\log(4(d+1)\bar{T}) = O((1/\e)\log^2 (1/\e))$. Let $D=4(d+1)\log(4(d+1)\bar{T})$. Then this guess can be done in time

\[
\sum_{i=0}^{D} {|Q_s| \choose i} \le (D+1)  |Q_s|^{D} \le 2^{O((1/\e)\log^4 (1/\e))}.
\]
We remark that for this step is that thin solutions are particularly useful. Indeed, guessing the support on the original ILP takes time $2^{O((1/\e)^2 \log^3 (1/\e))}$.

In step 4, the number of variables of the restricted ILP is $4(d+1)\log(4(d+1)\bar{T})=O((1/\e)\log^2 (1/\e))$. Moreover, the size of the input can be bounded by $O((1/\e^2)\log^3(1/\e)\log(n))$. Running Kannan's algorithm \cite{kannan_minkowskis_1987} to solve the ILP takes time $2^{O((1/\e)\log^3(1/\e))}\log(n)$. Hence, the total running time of our algorithm can be bounded by $2^{O((1/\e)\log^4(1/\e))}\log(n)$. 
\end{proof}


Putting all pieces together, we conclude with the following theorem.

\begin{theorem} \label{thm:main_runningtime}
 The minimum makespan problem on parallel machines $P||C_{\max}$ admits an EPTAS with running time $2^{O((1/\e)\log^4(1/\e))}+O(n\log n)$.
\end{theorem}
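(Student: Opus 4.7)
The plan is to assemble the pieces already developed in Section~\ref{sec:PCmax} into a single EPTAS and then carefully account for the running time. First I would invoke Graham's list scheduling \cite{graham_bounds_1966} to obtain an initial $2$-approximation $T_0$ of the optimal makespan $\OPT$, which takes time $O(n\log n)$ (sorting jobs). Then I would run a binary search for a value $T\in [T_0/2,T_0]$ that is a $(1+\e)$-approximation of $\OPT$; this requires only $O(\log(1/\e))$ guesses. For every candidate $T$ I would apply the two rounding steps of Lemma~\ref{lem:rounding1} (replacing small jobs by placeholder jobs in $\Jnew$) and Lemma~\ref{lm:rounding} (rounding big processing times so that there are only $d=|\Pi|\in O((1/\e)\log(1/\e))$ distinct sizes and $\bar T\in O(1/\e^2)$), then call Algorithm~\ref{alg:PTASparallelCmax} on the resulting instance to decide whether there is a schedule of makespan $\bar T=\lfloor(1+5\e)T\rfloor$ for the rounded big jobs.

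For correctness I would argue in two directions. If $T\ge \OPT$, then Lemma~\ref{lem:rounding1} and Lemma~\ref{lm:rounding} guarantee the existence of a feasible assignment of the rounded big jobs with makespan $\bar T$, and hence [conf-IP] is feasible for the parameters $(\pi,\bar T,b)$. By Theorem~\ref{thm:thin} this implies that a \emph{thin} solution exists, so Algorithm~\ref{alg:PTASparallelCmax} succeeds: Step~1 correctly guesses the load vector $b^c$ contributed by complex configurations (at most $O((1/\e)\log^2(1/\e))$ jobs in total by the weight bound in Theorem~\ref{thm:thin}), Step~2 packs them using a minimum number of machines, and Steps~3--4 enumerate the $O((1/\e)\log^2(1/\e))$ simple configurations of a thin solution and solve the restricted ILP via Kannan's algorithm \cite{kannan_minkowskis_1987}. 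Conversely, from the ILP solution I would build the schedule for $\Jround$, then undo the rounding by placing the original small jobs greedily into the slots of the placeholder jobs (again by Lemma~\ref{lem:rounding1}). The cumulative loss is a factor of $(1+\e)(1+5\e)(1+\e)$, which is $1+O(\e)$; a final rescaling of $\e$ by a constant yields a genuine $(1+\e)$-approximation.

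For the running time I would combine Lemma~\ref{lm:runningtime} (which gives $2^{O((1/\e)\log^4(1/\e))}\log n$ per call) with the $O(\log(1/\e))$ iterations of the binary search, absorbing the logarithmic factor into the exponential. The one genuinely fiddly step, which I see as the main obstacle, is converting the multiplicative $\log n$ factor into an additive $O(n\log n)$ term. I would handle it by a standard case split: if $\log n\le (1/\e)\log^4(1/\e)$ then $2^{O((1/\e)\log^4(1/\e))}\log n=2^{O((1/\e)\log^4(1/\e))}$, while otherwise $2^{(1/\e)\log^4(1/\e)}\le n$ and the product is at most $O(n\log n)$. Adding in the $O(n\log n)$ cost of sorting and aggregating job sizes into the multiplicity vector $b$ used by the ILP (after which the algorithm's running time is independent of $n$ except through $\log n$), the total bound $2^{O((1/\e)\log^4(1/\e))}+O(n\log n)$ claimed in the theorem follows.
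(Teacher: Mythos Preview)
Your proposal is correct and follows essentially the same route as the paper: Graham's $2$-approximation plus binary search, the rounding of Lemmas~\ref{lem:rounding1} and~\ref{lm:rounding}, a call to Algorithm~\ref{alg:PTASparallelCmax} whose cost is bounded by Lemma~\ref{lm:runningtime}, and then the identical case split on $n$ versus $2^{(1/\e)\log^4(1/\e)}$ to turn the multiplicative $\log n$ into an additive $O(n\log n)$. One minor slip: the weight bound in Theorem~\ref{thm:thin} counts \emph{machines} following complex configurations, so the total number of jobs on those machines is $O((1/\e^2)\log^2(1/\e))$, not $O((1/\e)\log^2(1/\e))$; this does not affect your argument since you correctly defer the running-time analysis to Lemma~\ref{lm:runningtime}.
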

\begin{proof}
Consider a scheduling instance with job set $\mathcal{J}$, processing times $p_j$ for $j\in \mathcal{J}$ and machine set $\mathcal{M}$. 
The greedy algorithm by Graham to obtain a 2-approximation can be implemented in $O(n\log n)$. After guessing the makespan $T$, the processing times are sorted and rounded as described in Lemma \ref{lm:rounding}. The rounding step can easily be implemented in $O(n)$ time. Applying Algorithm \ref{alg:PTASparallelCmax} after the rounding needs, according to Theorem \ref{lm:runningtime}, a running time of $2^{O((1/\e)\log^4(1/\e))} \log(n)$ time. Since there are at most $O(\log(1/\e))$ many guessing rounds for the makespan, we obtain a total running time of $O(n\log n + \log(1/\e)\cdot n) + 2^{O((1/\e)\log^4(1/\e))} \log(n)$.

If $n \le 2^{\frac{1}{\e}\log^4(\frac{1}{\e})}$ then the running time is upper bounded by $2^{O(\frac{1}{\e}\log^4(\frac{1}{\e}))}$, otherwise, the running time is at most $O(n\log n)$. In any case, the running time can be bounded by $2^{O(\frac{1}{\e}\log^4(\frac{1}{\e}))}+O(n\log n)$.
\end{proof}

\subsection{Extension to other objectives}
\label{sec:genObjt}

We now consider a more general family of objective functions defined by Alon et al.~\cite{alon_approximation_1997,alon_approximation_1998}. For a fixed function $f:\mathbb{R}_{\ge0} \rightarrow \mathbb{R}_{\ge0}$, we consider the following two objective functions: 

\begin{center}
 (I) $\min \sum_{i\in \mathcal{M}} f(\ell_i)$ \hspace{1cm }(II) $\min \max_{i\in \mathcal{M}} f(\ell_i)$,
 \end{center}
where $\ell_i$ denotes the load of machine $i$. Analogously, we study maximization versions of the problems

\begin{center}
 (I') $\max \sum_{i\in \mathcal{M}} f(\ell_i)$ \hspace{1cm }(II') $\max \min_{i\in \mathcal{M}} f(\ell_i)$,
 \end{center}
 For the minimization versions of the problem we assume that $f$ is convex, while for (I') and (II') we assume it is concave. Moreover, we will need that the function satisfies the following sensitivity condition.
\begin{condition}\label{cond:sensitivity}
 For all $\e >0 $ there exists $\delta=\delta(\e)>0$ such that for all $x,y\in \mathbb{R}_{\ge 0}$,
 \[
     (1-\delta)y \le x \le (1+\delta)y \quad \Rightarrow \quad (1-\e)f(y) \le f(x) \le (1+\e)f(y).
 \]
\end{condition}

Alon et al. showed that each problem in that family admits a PTAS with running time $h(\e) + O(n\log n)$, where $h(\e)$ is a constant term that depends only on $\e$. Moreover, if $\delta(\e)$ in the condition further satisfies that $1/(\delta(\e))\in O(1/\e)$, the running time is $2^{(1/\e)^{\text{poly}(1/\e)}} + O(n\log n)$. In what follows we show how to improve this dependency. Since $1/(\delta(\e))\in O(1/\e)$, we know that, for small enough $\e$, there exists a constant $\gamma$ (independent of $\e$ and $\delta$) such that $1/\delta\le \gamma/\e$. Moreover, we can assume w.l.o.g. that $\delta\le \e$, and thus $\delta\le \e \le \gamma \delta$. 

It is worth noticing that many interesting functions belong to this family. In particular (II) with $f(x)=x$ corresponds to the minimum makespan problem, (I) with $f(x) = x^p$, for constant $p$, corresponds to a problem that is equivalent to minimizing the $L_p$-norm of the vector of loads. Similarly, (II') with $f(x)=x$ corresponds to maximizing the minimum machine load. Notice that for all those objectives we have that $1/\delta = O(1/\e)$.

The techniques of Alon et al.~\cite{alon_approximation_1998} are based on a rounding method and then solving an ILP. We based our results in the same rounding techniques. Consider an arbitrary instance of a scheduling problem on identical machines with objective function (I), (II), (I') or (II'). Their first observation is that, if $L=\sum_j p_j/m$ is the average machine load, then a job with $p_j\ge L$ is scheduled alone on a machine in an optimal \textcolor{red}{solution~\cite{alon_approximation_1998}}. Hence, we can remove such job and a machine from the instance. In what follows, we assume without loss of generality, that $p_j< L$ for all $j$. For the sake of brevity, we summarize the rounding techniques of Alon et al. in the following theorem.

\begin{theorem}[Alon et al.~\cite{alon_approximation_1998}]
\label{thm:roundAlon}
 Consider an instance for the scheduling problem with job set~$\mathcal{J}$, identical machines~$\mathcal{M}$, and processing times $p_j$ for $j\in \mathcal{J}$ such that $p_j< L$ for all $j$. There exists a linear time algorithm that creates a new instance $I'$ with job set $\mathcal{J'}$, machine set $\mathcal{M}$, and processing times $p_j'$. \textcolor{red}{Moreover, there is an integer $\lambda\ge 1/\delta$ with $\lambda \in O(1/\delta)$ such that the new instance satisfies the following:}
 \begin{enumerate}
  \item Each job $j$ in $I'$ has processing time  $L/\lambda \le p_j' \le L$, and $p_j'$ is a integer multiple of $L/\lambda^2$.
  \item \textcolor{red}{If $L'=\sum_{j} p'_j/m$ then $L\le L'\le (1+2/\lambda)L$.}
  \item Let $\OPT$ and $\OPT'$ be the optimal value of instances $I$ and $I'$, respectively. Then $(1-\e)\OPT\le \OPT' \le (1+\e)\OPT$.
  \item There exists a linear time algorithm that transforms a feasible solution for instance $I'$ with objective value $V$ \kkcom{$V$ and $V'$ correct here? Seems they need to be changed} \jcom{Corrected.} to a feasible solution for $I$ with objective value $V'$ such that \textcolor{red}{$(1-\e)V\le V' \le (1+\e)V$.} 
 \end{enumerate}
\end{theorem}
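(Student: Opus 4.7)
The plan is to follow the classical geometric rounding of Alon, Azar, Woeginger, and Yadid. First I would set $\lambda:=\lceil 1/\delta\rceil$, so that $\lambda\ge 1/\delta$ and $\lambda\in O(1/\delta)=O(1/\e)$, and use $L/\lambda$ as the minimum allowed job size with $L/\lambda^2$ as the discretization grid. I would then partition the original jobs into \emph{big} jobs with $p_j\ge L/\lambda$ and \emph{small} jobs with $p_j<L/\lambda$. Each big job is kept and its size rounded up to the next multiple of $L/\lambda^2$ (with the largest ones rounded down instead to keep $p_j'\le L$, feasible since $p_j<L$). The small jobs are sorted by a linear-time bucket-sort and greedily swept into composite groups whose running total first lands in $[L/\lambda,\,2L/\lambda)$; each composite is then treated as a single job whose size is its rounded total, lying in $[L/\lambda, L]$ for $\lambda\ge 2$.

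Properties~1 and~2 follow directly from this construction. For~(1), every job in $\mathcal{J}'$ is by design an integer multiple of $L/\lambda^2$ inside $[L/\lambda,L]$. For~(2), the at most $\lambda m$ jobs of $\mathcal{J}'$ each have their size increased by at most $L/\lambda^2$ over the original sum they represent, so the total load grows by at most $L/\lambda$, giving $L'\le L(1+1/\lambda)\le(1+2/\lambda)L$; and $L'\ge L$ is immediate since the rounding is upwards and composites cover all small jobs.

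Properties~3 and~4 are the substance of the argument and rely crucially on Condition~\ref{cond:sensitivity}. For the direction $I'\to I$, any schedule of $I'$ is turned into a schedule of $I$ by unpacking each composite in place, which changes a machine load by at most $2L/\lambda+L/\lambda^2 \le 3\delta L$. Conversely, given an $I$-schedule, one removes the small jobs and places each composite on the machine holding the largest total processing time of its constituents; each machine load again changes by $O(\delta L)$. In both directions every nonempty machine carries load $\ge L/\lambda$, so the perturbation is a $(1\pm O(\delta))$ multiplicative factor of the original load; Condition~\ref{cond:sensitivity} then translates this into a $(1\pm\e)$ factor in each $f(\ell_i)$, which is preserved under sum, max, and min, yielding the required bound on $V$ and $V'$, and on $\OPT$ and $\OPT'$ by comparing optima. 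The main obstacle is precisely this last step: ensuring that every machine whose load enters the objective has load $\Omega(L/\lambda)$ so that the multiplicative sensitivity condition applies; this is exactly why the grid parameter is chosen as $\Theta(1/\delta)$ rather than $\Theta(1/\e)$, and why the assumption $p_j<L$ is essential, as a job of size $\ge L$ scheduled alone on a machine cannot be absorbed by the sensitivity argument and must be handled by the preprocessing step removing such jobs.
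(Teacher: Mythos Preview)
The paper does not prove this theorem; it is stated without proof as a summary of the rounding of Alon, Azar, Woeginger, and Yadid~\cite{alon_approximation_1998}. Your construction (big jobs rounded to the grid $L/\lambda^2$, small jobs glued into composites of size in $[L/\lambda,2L/\lambda)$) is essentially theirs, and Properties~1 and~2 follow as you indicate.

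Your argument for Properties~3 and~4 has a genuine gap. You bound the per-machine load change by an \emph{additive} $O(\delta L)$ and then assert that, since every nonempty machine has load at least $L/\lambda$, this is a $(1\pm O(\delta))$ \emph{multiplicative} change. But $L/\lambda\approx\delta L$, so an additive perturbation of $3\delta L$ on a load of order $\delta L$ is a relative change of order~$1$, not $O(\delta)$; Condition~\ref{cond:sensitivity} cannot be invoked from this. For the direction $I'\to I$ the correct argument is item-by-item: every job of $I'$ has size at least $L/\lambda$ and un-rounding it (or unpacking a composite into its constituents) changes its size by at most $L/\lambda^2$, i.e.\ by a relative factor at most $1/\lambda\le\delta$, so every machine load changes by the same relative factor and Condition~\ref{cond:sensitivity} applies directly. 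For $I\to I'$ there is a second problem: your plurality rule does not bound the load change---if the constituents of every composite are spread evenly across the machines, tie-breaking may send all composites to a single machine and change its load by $\Theta(L)$. One needs a more careful assignment (e.g.\ greedily matching composites to machines so that each machine's total composite size lies within one composite of its original small-job load) together with the observation, proved via the convexity/concavity of $f$ and the hypothesis $p_j<L$ in the spirit of Lemma~\ref{lm:makespanGen}, that a (near-)optimal schedule has every load bounded between constant multiples of $L$; only then does an additive $O(L/\lambda)$ change become the multiplicative $(1\pm O(\delta))$ that Condition~\ref{cond:sensitivity} requires.
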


Given this result, it suffices to find a $(1+\e)$-approximate solution for instance $I'$. \kkcom{Maybe we should briefly mention what happens to the small jobs} \jcom{Agreed that it would be better but this is not critical. Let's do it for the journal version.} To do so, we further round the processing times as in the previous section by defining $\bar{p}_j$ as the value $(1+\delta)^{\lceil \log_{1+\delta} p_j'\rceil}$ rounded up to the next multiple of $L/\lambda^2$ for all $j\in \mathcal{J}'$. Notice that \textcolor{red}{$\bar{p}_j\le (1+\delta)^{\lceil \log_{1+\delta} p_j'\rceil} + L/\lambda^2 \le (1+\delta)p_j' + L/\lambda^2 \le (1+\delta)p_j' + p_j'/\lambda \le (1+2\delta)p_j' \le (1+\delta)^2p_j'$}. \textcolor{red}{Hence, for any assignment that gives a load $\ell_i$ on machine $i$ for $p'_j$, the same assignment has a load $\bar{\ell}_i$ with $\ell_i\le \bar{\ell}_i \le (1+\delta)^2\ell_i$. By Condition~\ref{cond:sensitivity} we conclude that the} new optimal value $\overline{\OPT}$ satisfies that $(1-O(\e))\OPT\le\overline{\OPT}\le(1+O(\e))\OPT$.

Let $\Pi=\{\pi_1,\ldots,\pi_d\}$ be the distinct values that the processing times $\bar{p}_j$ can take. Notice that $d=|\Pi| = O((1/\delta)\log(1/\delta))$. We consider the knapsack polytope with capacity \textcolor{red}{$\bar{T}:=4L$}, that is $\mathcal{P} = \{ c \in \mathbb{R}^{d}_{\geq 0}: \pi \cdot c \le \bar{T} \}$. \textcolor{red}{Notice that $\pi$ and $\bar{T}$ are integer multiples of $L/\lambda^2$, and thus $\mathcal{P} = \{ c \in \mathbb{R}^{d}_{\geq 0}: \pi/(L/\lambda^2) \cdot c \le \bar{T}/(L/\lambda^2)\}$. The following lemma, that is a simple adaptation of an observation by Alon et al.~\cite{alon_approximation_1998}, shows that there exists an optimal solution for the rounded instance that uses only configurations in $\mathcal{P}$.}

\begin{lemma}
\label{lm:makespanGen}
 \textcolor{red}{For $\e>0$ small enough, the rounded instance with processing times $\bar{p}_j$ admits an optimal solution with makespan at most $4L$.}
\end{lemma}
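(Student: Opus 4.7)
The idea is a local-improvement / averaging argument: take any optimal assignment and, whenever some machine has load exceeding $4L$, swap a single job to a less loaded machine, verifying that the objective never worsens. A potential function then gives termination.

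\emph{Step 1: bound the average load.} Let $\bar L=\tfrac{1}{m}\sum_{j\in\mathcal{J}'}\bar p_j$ denote the average load of the rounded instance. Combining the bound $\bar p_j\le (1+\delta)^2 p'_j$ from the construction just above the lemma with $L'\le (1+2/\lambda)L\le(1+2\delta)L$ from Theorem~\ref{thm:roundAlon}, I get
\[
\bar L\;\le\;(1+\delta)^2(1+2\delta)\,L,
\]
which is strictly below $2L$ for $\e$ (and hence $\delta$) sufficiently small.

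\emph{Step 2: the swap.} Let $x$ be an optimal assignment and suppose a machine $i^*$ has load $\ell_{i^*}>4L$. By pigeonhole on Step~1 there is a machine $i'$ with $\ell_{i'}\le\bar L<2L$. I pick an arbitrary job $k$ on $i^*$ and move it to $i'$, producing new loads $\ell_{i^*}'=\ell_{i^*}-\bar p_k$ and $\ell_{i'}'=\ell_{i'}+\bar p_k$; all other loads are unchanged. For small enough $\delta$ I would verify: (a) $\ell_{i'}'<4L$ (the sum of two quantities below $2L$); (b) $\ell_{i^*}'>\ell_{i'}'$, since $\ell_{i^*}-\ell_{i'}>4L-\bar L\ge(3-O(\delta))L$ while $2\bar p_k\le 2(1+\delta)^2 L\le(2+O(\delta))L$; and (c) $\ell_{i^*}'<\ell_{i^*}$.

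\emph{Step 3: the objective does not worsen.} Property (b) together with $\ell_{i^*}+\ell_{i'}=\ell_{i^*}'+\ell_{i'}'$ means that the pair $(\ell_{i^*},\ell_{i'})$ majorizes $(\ell_{i^*}',\ell_{i'}')$. The two-point majorization inequality then yields $f(\ell_{i^*}')+f(\ell_{i'}')\le f(\ell_{i^*})+f(\ell_{i'})$ whenever $f$ is convex, and the reverse whenever $f$ is concave, handling objectives (I) and (I'). For (II) and (II'), since no other load changes and $\ell_{i'}'<4L\le \ell_{i^*}$, the maximum load cannot increase and the minimum load cannot decrease; by monotonicity of $f$ on the relevant range the objective again does not worsen.

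\emph{Step 4: termination.} The potential $\Phi(x):=\sum_i\max(0,\ell_i-4L)$ strictly decreases by at least $\min(\bar p_k,\ell_{i^*}-4L)>0$ per move. Since every $\bar p_j$ is an integer multiple of $L/\lambda^2$, $\Phi$ takes values in a discrete subset of $\mathbb{R}_{\ge 0}$, so the procedure terminates. The final assignment is still optimal and has all loads at most $4L$.

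\emph{Main obstacle.} The tight step is~(b): the rounded sizes $\bar p_k$ can be as large as roughly $(1+O(\delta))L$, so $2\bar p_k$ is already close to $2L$. Setting the overload threshold at $4L$ (rather than $2L$ or $3L$) is exactly what produces a uniformly positive gap $\ell_{i^*}-\ell_{i'}-2\bar p_k=\Omega(L)$ for all small enough $\delta$, and hence the ``order-preserving'' majorization that the rest of the proof requires.
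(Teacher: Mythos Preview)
Your approach is essentially the paper's: move a job from an overloaded machine to an underloaded one and argue the objective does not worsen. The paper packages this as a one-shot contradiction (among all optimal solutions, pick one minimizing $\sum_i \ell_i^2$; a single swap strictly decreases this potential while not increasing the objective), whereas you iterate with the potential $\Phi=\sum_i\max(0,\ell_i-4L)$. Both variants are fine, and your Steps~1, 2, and 4 are correct.

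There is, however, a genuine gap in Step~3 for objectives (II) and (II'). You write ``by monotonicity of $f$ on the relevant range'', but $f$ is only assumed convex (for (II)) or concave (for (II')), not monotone; e.g.\ $f(x)=(x-c)^2$ is convex and nowhere monotone on $[0,2c]$. Knowing that the maximum load does not increase and the minimum load does not decrease is therefore not enough by itself. The fix is exactly the paper's: for convex $f$, the maximum of $f$ on any interval $[a,b]$ is attained at an endpoint, so $\max_i f(\ell_i)=\max\{f(\ell_{\min}),f(\ell_{\max})\}$. Your property~(b) gives $\ell_{i'}<\ell_{i'}'<\ell_{i^*}'<\ell_{i^*}$, so after the swap the new loads still lie in $[\ell_{\min},\ell_{\max}]$ and the interval $[\ell_{\min}',\ell_{\max}']$ can only shrink; convexity then yields $\max\{f(\ell_{\min}'),f(\ell_{\max}')\}\le \max\{f(\ell_{\min}),f(\ell_{\max})\}$. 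The concave case for (II') is symmetric with $\min$ in place of $\max$. Once you replace ``monotonicity'' by this endpoint argument, the proof goes through.
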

\begin{proof}
  Among all optimal solutions to the problem, consider one that minimizes $\sum_{i}\ell_i^2$, where $\ell_i$ is the load on machine $i$. Assume that there exists a machine $i$ such that $\ell_i> 4L$. Notice that 
 \[
 \sum_j \bar{p}_j/m\le (1+\delta)^2\sum_j p'_j/m = (1+\delta)^2L'\mathop{\le}^{\text{Theorem~\ref{thm:roundAlon}}} (1+\delta)^2 (1+2/\lambda)L\le (1+\delta)^4L. 
 \]
 Since $\delta\le \e$, for $\e$ small enough ($\e\le 1/10$ suffices) we have that $(1+\delta)^4\le 2$ and thus $\sum_j \bar{p}_j/m\le 2L$. Also, recall that $\bar{p}_j\le(1+\delta)^2p_j'\le (1+\delta)^2 L\le 2L$ for any $j$, where the second to last inequality follows from Theorem~\ref{thm:roundAlon}. Since $\ell_{\min}=\min_i \ell_i\le\sum_j \bar{p}_j/m \le 2L$, then $\ell_i-\ell_{\min}> 4L - 2L = 2L.$ Then, for any job $j$, we have that $p_j< \ell_i - \ell_{\min}$. Let $j^*$ be any job assigned to machine $i$. Hence, in particular we have that $p_{j^*} < \ell_i - \ell_{\min}$.
 
 Recall that for problems (I) and (II) function $f$ is convex. Hence, it holds that $f(x+\Delta)+f(y-\Delta)\le f(x) + f(y)$ for all \(0\le x\le y \) with \(0\le \Delta \le y-x \)~\cite{alon_approximation_1998}. Moreover, the inequality becomes strict if $f$ is strictly convex. Setting $x=\ell_{\min}$, $y=\ell_i$ and $\Delta = p_{j^*}$, the inequality implies that moving job \(j^*\) to machine \(i^*\in \arg\min_i{\ell_i}\) decreases strictly $\sum_{i}\ell_i^2$. Moreover, the objective function (I) does not increase when performing this move, which yields a contradiction for this objective. Similarly, for problem (II) the objective does not increase since $\max\{f(\xi):\xi\in [x,y]\}$ is always attained at $x$ or $y$ for $f$ convex. This yields a contradiction for (II).
 
 Analogously, for problems (I') and (II') function $f$ is concave and thus $f(x+\Delta)+f(y-\Delta)\ge f(x) + f(y)$ holds for all \(0\le x\le y \) with \(0\le \Delta \le y-x \)~\cite{alon_approximation_1998}. Hence, moving job $j^*$ to machine $i^*$ decreases $\sum_{i}\ell_i^2$ but does not increase the objective (I'). Since $f$ concave implies that $\min\{f(\xi):\xi\in [x,y]\}$ is always attained at $x$ or $y$, we also obtain a contradiction for (II'). The lemma follows.   
\end{proof}

Let $L=\sum_j p_j/m$ be the average machine load (of the original instance). After our rounding we obtain an instance $I'$ with job set $\mathcal{J}'$ and processing times $\bar{p}_j$ for $j\in \mathcal{J}'$. Moreover, the $\bar{p}_j$ are multiples of $L/\lambda^2$, where $\lambda\ge 1/\delta$ is an integer such that $\lambda = O(1/\delta)$, and also $\bar{p}_j\ge L/\lambda$. It holds that there exists an optimal solution of the rounded instance with makespan at most $4L$, see Lemma~\ref{lm:makespanGen} (in particular $\bar{p}_j\le 4L$ for all $j$). 
Let $\Pi=\{\pi_1,\ldots,\pi_d\}$ be the distinct values that the processing times $\bar{p}_j$ can take. Our rounding guarantees that $d=|\Pi| = O((1/\delta)\log(1/\delta))$. We consider the knapsack polytope with capacity $\bar{T}:=4L$, that is $\mathcal{P} = \{ c \in \mathbb{R}^{d}_{\geq 0}: \pi \cdot c \le \bar{T} \}$. Notice that $\pi$ and $\bar{T}$ are integer multiples of $L/\lambda^2$, and that $\mathcal{P}$ can also be written as $\{ c \in \mathbb{R}^{d}_{\geq 0}: \pi/(L/\lambda^2) \cdot c \le \bar{T}/(L/\lambda^2)\}$. 

As before, we say that a configuration is simple if $|\supp{c}|\le \log (\bar{T}+1)$, and complex otherwise.  We denote by $Q_c \subseteq Q$ the set of complex configurations and by  $Q_s \subseteq Q$ the set of simple configurations. In what follows we focus on objective function (I). 

We set an ILP for the problem as before. Notice that each configuration $c$ incurs a cost of $f_c:=f(\pi \cdot c)$. Moreover, we round and scale the values $f_c$ by defining $\bar{f}_c=\lceil f_c/(\e f_{\min}) \rceil$, where $f_{\min}=\min_{c\in Q}f_c$. It is not hard to see that solving a problem with those coefficients yields a $(1+\e)$-approximate solution to the optimal solution of $I'$ with processing times $\bar{p}_j$. Let also $b_k$ be the number of jobs $j$ of processing time $\bar{p}_j=\pi_k$ \textcolor{red}{in $\mathcal{J}'$}. Consider the ILP obtained by adding to [conf-IP] the objective function $\min \sum_{c\in Q} \bar{f}_c\cdot x_c$. We call this ILP [cost-conf-IP]. With our previous discussion, it suffices to solve this ILP optimally. 
To solve this problem, we first notice that the largest coefficient in the objective can be bounded as follows.

\begin{lemma}
\label{lm:fbound}
 If $f$ satisfies Condition~\ref{cond:sensitivity} then the largest value $\max_{c\in Q}\bar{f}_c$ is upper bounded by ~$1/\delta^{O(1)}$.
\end{lemma}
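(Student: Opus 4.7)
The plan is to bound the ratio $\max_{c\in Q} f_c / f_{\min}$ uniformly; since $\bar f_c = \lceil f_c/(\e f_{\min})\rceil$, this same bound carries over to $\bar f_c$ after losing only a factor of $1/\e = O(1/\delta)$.

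First I would localize the values $\pi\cdot c$ for $c \in Q$. By construction, every configuration satisfies $\pi \cdot c \le \bar T = 4L$, and every nonempty configuration satisfies $\pi \cdot c \ge L/\lambda$, since any single job contributes at least $\bar p_j \ge L/\lambda$ to the load. Hence the set of realized loads $\{\pi\cdot c : c\in Q,\ c\neq 0\}$ lies in the interval $[L/\lambda,\, 4L]$, whose endpoints differ by a multiplicative factor of at most $4\lambda = O(1/\delta)$. (The empty configuration is excluded from $f_{\min}$; one may always assume $f_{\min}>0$, otherwise the scaling of $\bar f_c$ would not have been performed.)

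Next I would iterate the sensitivity condition to compare $f$-values across this interval. Given any two values $y, z \in [L/\lambda, 4L]$, build a chain $y=y_0,y_1,\ldots,y_k=z$ in which each consecutive ratio $y_{i+1}/y_i$ is either $1+\delta$ or $1/(1-\delta)$; each such step fits the hypothesis $(1-\delta)y_i \le y_{i+1} \le (1+\delta)y_i$ of Condition~\ref{cond:sensitivity} (after possibly swapping the roles of $y_i,y_{i+1}$), so each step multiplies the $f$-value by a factor at most $1+\e$. Since $z/y \le O(1/\delta)$, a chain of length $k = O(\log(1/\delta)/\delta)$ suffices, yielding
\[
\frac{f(z)}{f(y)} \le (1+\e)^k \le e^{\e k} = e^{O(\log(1/\delta))} = \left(\tfrac{1}{\delta}\right)^{O(1)},
\]
where the last equality uses the hypothesis $\e \le \gamma\delta$.

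Taking $y$ to be the load realizing $f_{\min}$ and $z = \pi\cdot c$ for an arbitrary $c\in Q$, we conclude $f_c \le f_{\min}\cdot (1/\delta)^{O(1)}$, so
\[
\bar f_c \;=\; \left\lceil \frac{f_c}{\e\, f_{\min}} \right\rceil \;\le\; 1 + \frac{(1/\delta)^{O(1)}}{\e} \;=\; \left(\tfrac{1}{\delta}\right)^{O(1)},
\]
again using $1/\e = O(1/\delta)$. The only delicate point is the treatment of the all-zero configuration (guaranteeing $f_{\min}>0$); this is a modelling matter inherited from Alon et al.~\cite{alon_approximation_1998} and does not affect the estimate. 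The main work of the proof is thus the pigeonhole-like iteration of the sensitivity condition, for which having $\e$ and $\delta$ of the same order (so that $\e\cdot\log(1/\delta)/\delta = O(\log(1/\delta))$) is exactly what pins the final bound at polynomial in $1/\delta$.
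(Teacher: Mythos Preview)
Your proposal is correct and follows essentially the same approach as the paper: both localize the loads $\pi\cdot c$ to the interval $[L/\lambda,4L]$ (of multiplicative width $O(1/\delta)$) and then iterate Condition~\ref{cond:sensitivity} along a geometric chain of length $k=O(\log(1/\delta)/\delta)$, using $\e\le\gamma\delta$ to turn $(1+\e)^k$ into $(1/\delta)^{O(1)}$. The paper presents this by rewriting $(1+\e)^{\log_{1+\delta}(4\lambda)}=(4\lambda)^{\log_{1+\delta}(1+\e)}$ and bounding $\log_{1+\delta}(1+\gamma\delta)=O(1)$, whereas you use $(1+\e)^k\le e^{\e k}$; it also splits into the two cases $x_{\min}\lessgtr x_{\max}$ rather than allowing both step ratios in one chain, but these are purely presentational differences.
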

\begin{proof}
 We first bound $(\max_{c\in Q} f_c) /(\e f_{\min})$. Notice that Condition~\ref{cond:sensitivity} implies that $f$ is continuous on $\mathbb{R}_{\ge0}$, and thus it admits a minimum and maximum in the interval $[L/\lambda,4L]$. Let $x_{\min}\in\arg\min \{f(x): x\in [L/\lambda,4L]\}$ and $x_{\max}\in\arg\max \{f(x): x\in [L/\lambda,4L]\}$. 
 
 Consider first the case in which $x_{\min} \le x_{\max}$ (this is not always true since $f$ might not be monotone). We now use Condition~\ref{cond:sensitivity} iteratively. Let $y^k := (1+\delta)^kx_{\min}$. Since $y^k\le y^{k-1}(1+\delta)$, Condition~\ref{cond:sensitivity} implies that $f(y^k)\le (1+\e)f(y^{k-1})$. Iterating this idea we obtain that $f(y^{k})\le (1+\e)^kf(y^0)$. Taking $k=\lceil \log_{1+\delta}(x_{\max}/x_{\min})\rceil$ implies that $x_{\max}\le y^k \le x_{\max}(1+\delta)$ and thus, by Condition~\ref{cond:sensitivity}, it holds that $f(y^k) \ge (1-\e)f(x_{\max})$. Recall that $\delta\le \e \le \gamma \delta$. We obtain that
 
 \begin{align*}
  f(x_{\max})\le \frac{f(y^{k})}{1-\e} & \le \frac{(1+\e)^{k}}{1-\e}f(x_{\min})\\
  &\le \frac{(1+\e)^{\log_{1+\delta}(x_{\max}/x_{\min})+1}}{1-\e}f(x_{\min})\\
  &\le \frac{(1+\e)^{\log_{1+\delta}(4\lambda)+1}}{1-\e}f(x_{\min})\\
  &= \frac{1+\e}{1-\e} (4\lambda)^{\log_{1+\delta}(1+\e)}f(x_{\min})\\
  &\le \frac{1+\e}{1-\e} (4\lambda)^{\log_{1+\delta}(1+\gamma\delta)}f(x_{\min})\\
  &= (1/\delta)^{O(1)} f(x_{\min}),
  \end{align*}
  
 where the last expression follows since $ \log_{1+\delta}(1+\gamma\delta)=\ln(1+\gamma\delta)/\ln(1+\delta)\le \gamma\delta/\ln(1+\delta) = O(\gamma) = O(1)$ (for $\delta$ small enough), and since $\lambda = O(1/\delta)$. We conclude that \[                                                                                                                                                                                                                                                       \max_{c\in Q}\bar{f}_c \le (f(x_{\max})) /(\e f(x_{\min})) +1 \le (1/\e)(1/\delta)^{O(1)}+1=(1/\delta)^{O(1)}.                                                                                                                                                                                                                                                      \] 
 For the case in which $x_{\max}\le x_{\min}$ we define the sequence $y^k := (1-\delta)^kx_{\min}$. The rest of the proof is analogous and the details are left to the reader.
\end{proof}

As we now must consider the objective function, we cannot simply apply Theorem~\ref{thm:thin} to [cost-conf-ILP]. However, we can prove a slightly weaker version by decomposing the ILP in several smaller ones and applying the theorem to each of them.

\begin{theorem}
 \label{thm:thinGen}
 If [cost-conf-IP] is feasible, then there exists an optimal solution $x$ satisfying:
 \begin{enumerate}
  \item $\sum_{c\in Q_c}x_c \in O((1/\delta^3)\log^2(1/\delta))$, and
  \item $|\supp{x}\cap Q_s|\in O((1/\delta)\log^2(1/\delta))$.
 \end{enumerate}
\end{theorem}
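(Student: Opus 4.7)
The plan is to decompose [cost-conf-IP] according to the load $\pi\cdot c$ of each configuration, apply Theorem~\ref{thm:thin} within each load class, and then refine the simple part once more using an augmented Eisenbrand--Shmonin argument. Since $\pi$ and $\bar T$ are integer multiples of $L/\lambda^2$, the possible loads form at most $\bar T/(L/\lambda^2)+1=O(1/\delta^2)$ distinct values. Write $Q=\bigcup_v Q^v$ where $Q^v=\{c\in Q:\pi\cdot c=v\}$. The crucial observation is that $\bar f_c$ depends only on $\pi\cdot c$, so all configurations within a single $Q^v$ share a common objective coefficient.

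Let $x^*$ be an optimal solution of [cost-conf-IP]. For each load value $v$, the restriction of $x^*$ to coordinates in $Q^v$ is feasible for the sub-ILP that asks to decompose the partial demand $b^v=\sum_{c\in Q^v}c\cdot x^*_c$ using $m^v=\sum_{c\in Q^v}x^*_c$ configurations drawn from $Q^v$. This sub-ILP has the form of [conf-IP], and after rescaling by $L/\lambda^2$ its integer capacity becomes $4\lambda^2=O(1/\delta^2)$. Applying Theorem~\ref{thm:thin} yields a thin solution whose complex weight is at most $2(d+1)\log(4(d+1)\cdot O(1/\delta^2))=O((1/\delta)\log^2(1/\delta))$. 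Because every configuration in $Q^v$ carries the same cost, this per-group replacement preserves the objective, so the combined vector remains optimal. Summing the per-group complex weights over the $O(1/\delta^2)$ load classes establishes property~1.

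The combined solution may still carry simple support of order $O(1/\delta^2)$ times the per-group bound, so I apply a second round of sparsification to obtain property~2. Freeze the complex part at its current values and consider the ILP in the simple variables $\{x_c:c\in Q_s\}$ augmented by the equality $\sum_{c\in Q_s}\bar f_c\, x_c=F_s$, where $F_s$ is the residual objective value carried by simple configurations. The coefficients of this $(d+2)$-row system are bounded: configuration entries are at most $\bar T/(L/\lambda)=O(1/\delta)$, while by Lemma~\ref{lm:fbound} the cost entries satisfy $\bar f_c\le(1/\delta)^{O(1)}$. Applying the Eisenbrand--Shmonin argument (as in Lemma~\ref{lm:ES}) to this augmented system yields a feasible replacement whose simple support has size at most $2(d+2)\log(4(d+2)\cdot(1/\delta)^{O(1)})=O((1/\delta)\log^2(1/\delta))$, giving property~2. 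The main obstacle is decoupling the two bounds so that only the complex-weight bound suffers the $1/\delta^2$-blowup from summing over load classes; the explicit cost equality in the second stage is what both preserves optimality and eliminates this loss on the simple side.
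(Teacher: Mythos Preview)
Your proposal is correct and follows essentially the same approach as the paper: decompose by load class, apply Theorem~\ref{thm:thin} to each class (preserving optimality since $\bar f_c$ depends only on $\pi\cdot c$), sum the complex weights over the $O(1/\delta^2)$ classes, and then sparsify the simple part via Eisenbrand--Shmonin while holding the complex part fixed. The only cosmetic difference is that the paper invokes the cost-version of the Eisenbrand--Shmonin bound directly on the objective ILP, whereas you make the standard reduction explicit by appending the equality $\sum_{c\in Q_s}\bar f_c x_c=F_s$ as an extra row; both yield the same $O((1/\delta)\log^2(1/\delta))$ bound on $|\supp{x}\cap Q_s|$.
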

\begin{proof}
Notice that the load of each configuration $\pi \cdot c$ is a multiple of $L/\lambda^2$, and thus \textcolor{red}{$\pi \cdot c\in \{L/\lambda,L/\lambda+L/(\lambda^2),\ldots,4L\}$} \kkcom{$\bar{T}$ instead of L?}\jcom{fixed}. We classify the configurations according to their loads, $Q^\ell := \{c\in Q: \pi\cdot c = L/\lambda + \ell\cdot L/(\lambda^2)\}$, for \textcolor{red}{$\ell\in\{0,\ldots,4\lambda^2-\lambda\}$}. Let $x^*$ be an optimal solution of [cost-conf-IP]. Then we can considered an ILP for each load value $\ell$:

\begin{align}
 \text{[conf-IP]}_{\ell}\quad & \sum_{c\in Q^{\ell}} c\cdot x_c = \sum_{c\in Q^{\ell}} c\cdot x^*_c,\\
 & \sum_{c\in Q^{\ell}} x_c = \sum_{c\in Q^{\ell}} x^*_c ,\\
 &x_c \in \mathbb{Z}_{\ge0} & \text{ for all }c\in Q^{\ell}. 
\end{align}

Scaling $\pi$ by multiplying it by $\lambda^2/L$ we obtain an integral vector (since $\pi$ is an integer multiple of $L/(\lambda^2)$), we can apply Theorem~\ref{thm:thin} to each ILP [conf-IP]$_{\ell}$, which yields that there exists a thin solution $x^{\ell}$. In particular the number of complex configurations in $x^{\ell}$ is $\sum_{c\in Q_c\cap Q^{\ell}} x_c^{\ell} \in O((1/\delta)\log^2(1/\delta))$. Since $\bar{f}_c$ depends only on the load of $c$, concatenating these solutions yields a solution $x':=(x^{\ell})_{\ell}$ that is optimal for [cost-conf-IP], such that $\sum_{c\in Q_c} x_c' \in O((\lambda^2)\cdot(1/\delta)\log^2(1/\delta))=O((1/\delta^3)\log^2(1/\delta))$. It remains to bound the number of simple configurations in the support. To this end, we consider the ILP restricted to simple configurations as follows:

\begin{align}
 \nonumber
 \text{[cost-conf-IP]}_s\quad &\min \sum_{c\in Q_s} \bar{f}_c\cdot x_c\\ 
 & \sum_{c\in Q_s} c\cdot x_c = b-\sum_{c\in Q_c} c\cdot x'_c,\\
 & \sum_{c\in Q_s} x_c = m-\sum_{c\in Q_c} x'_c,\\
 &x_c \in \mathbb{Z}_{\ge0} & \text{ for all }c\in Q_s. 
\end{align}

We apply the result of Eisenbrand and Shmonin~\cite{eisenbrand_caratheodory_2006} to this ILP. In its more general form, this result ensures the existence of a solution $x''$ with support of size $O(N(\log(N) + \Delta))$, where $N$ is the number of restrictions and $\Delta$ is the encoding size of the largest coefficient appearing in the cost vector and restriction matrix. In our case $N= d+1 = O((1/\delta)\log(1/\delta))$, and $\Delta = O(\log(\max\{1/\delta, \max_{c\in Q} \bar{f}_c)\} )=O(\log(1/\delta))$ (Lemma~\ref{lm:fbound}). Thus $O(N(\log(N) + \Delta))= O((1/\delta)\log^2(1/\delta))$. The theorem follows by concatenating $(x''_c)_{c\in Q_s}$ with $(x'_c)_{c\in Q_c}$.
\end{proof}

Finally, we use the structure given by the theorem to solve this ILP optimally.

\begin{algorithm}
\label{alg:PTASparallelGen}
\hspace{4cm}\begin{enumerate}
\item For each processing time $\pi_k$, guess the number $b^c_k \leq b_k$ of jobs covered by complex configurations.
\item Guess the number $m^c$ of machines that schedule jobs $b^c$.
\item Compute an optimal solution for instance with number of jobs $b^c$ on $m^c$ machines with a dynamic program.
\item Guess the support of simple configurations ${\bar Q}_s \subseteq Q_s$ used by the solution implied by Theorem~\ref{thm:thin}, with $|{\bar Q}_s| \in O((1/\delta)\log^2 (1/\delta))$.
\item Solve the ILP restricted to configurations in $\bar{Q}_s$:
\begin{align*}
 & \min \sum_{c\in Q_s} \bar{f}_c\cdot x_c\\
 & \sum_{c\in \bar{Q}_s} c\cdot x_c =b-b^c,\\
 & \sum_{c\in {\bar Q}_s} x_c = m-m^c,\\
 &x_c \in \mathbb{Z}_{\ge 0} & \text{ for all }c\in \bar{Q}_s. 
\end{align*}
\end{enumerate}
\end{algorithm}

\begin{lemma} \label{lm:runningtimeGen}
Algorithm~\ref{alg:PTASparallelGen} can be implemented with a running time of $2^{O((1/\delta)\log^4(1/\delta))} \log(n)$.
\end{lemma}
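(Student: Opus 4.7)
The plan is to bound the running time of each step of Algorithm~\ref{alg:PTASparallelGen} separately, largely following the template of the proof of Lemma~\ref{lm:runningtime}. The three main structural facts to invoke are: Theorem~\ref{thm:thinGen}, which guarantees an optimal solution with $\sum_{c\in Q_c}x_c=O((1/\delta^3)\log^2(1/\delta))$ and $|\supp{x}\cap Q_s|=O((1/\delta)\log^2(1/\delta))$; Corollary~\ref{cor:vertices}, which bounds $|Q_s|$; and Lemma~\ref{lm:fbound}, which bounds $\max_c\bar{f}_c$ by $(1/\delta)^{O(1)}$. Throughout, I will use $d=|\Pi|=O((1/\delta)\log(1/\delta))$, the scaled capacity $\bar{T}/(L/\lambda^2)=4\lambda^2=O(1/\delta^2)$, and the fact that each configuration contains at most $\bar{T}/(L/\lambda)=4\lambda=O(1/\delta)$ jobs.

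For step~1, the total number of jobs assigned to complex configurations is at most $J:=O((1/\delta^3)\log^2(1/\delta))\cdot O(1/\delta)=O((1/\delta^4)\log^2(1/\delta))$, so enumerating the tuple $(b_k^c)_{k=1}^d$ with $0\le b_k^c\le J$ costs $(J+1)^d=2^{O(d\log J)}=2^{O((1/\delta)\log^2(1/\delta))}$. Step~2 enumerates $m^c\in\{0,\ldots,O((1/\delta^3)\log^2(1/\delta))\}$ and is strictly dominated. For step~3 I would use the same dynamic program as in Lemma~\ref{lm:runningtime}, this time carrying the minimum cost $\sum_c\bar{f}_c x_c$; since $\bar{f}_c$ depends only on the configuration, the recursion is unchanged and the table size is $O(m^c\prod_k(b_k^c+1))$ with transitions of the same order, giving total time $2^{O((1/\delta)\log^2(1/\delta))}$.

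Step~4 is the bottleneck and the main place where one has to be careful. By Corollary~\ref{cor:vertices}, applied with $T=O(1/\delta^2)$ and $d=O((1/\delta)\log(1/\delta))$, one has $|Q_s|\le 2^{O(\log^2(1/\delta))}$. The support $\bar{Q}_s$ has cardinality at most $D=O((1/\delta)\log^2(1/\delta))$ by Theorem~\ref{thm:thinGen}, so the number of candidate supports is
\begin{align*}
\sum_{i=0}^{D}\binom{|Q_s|}{i}\le (D+1)\,|Q_s|^{D}=2^{O((1/\delta)\log^4(1/\delta))}.
\end{align*}
For step~5 the restricted ILP has $N=|\bar{Q}_s|=O((1/\delta)\log^2(1/\delta))$ variables, constraint coefficients bounded by $O(1/\delta^2)$, right-hand sides of magnitude $O(n)$, and objective coefficients of magnitude $(1/\delta)^{O(1)}$, so its binary encoding has size $s=\mathrm{poly}(1/\delta)\log(n)$. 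Kannan's algorithm~\cite{kannan_minkowskis_1987} then solves it in $2^{O(N\log N)}s=2^{O((1/\delta)\log^3(1/\delta))}\log(n)$. Summing over all steps, step~4 dominates the enumerative part and the $\log(n)$ factor comes solely from step~5, yielding the claimed bound $2^{O((1/\delta)\log^4(1/\delta))}\log(n)$. The main care required is to verify that the weaker $O((1/\delta^3)\log^2(1/\delta))$ bound on the complex weight (cubic, rather than linear as in the makespan case) propagates through steps~1--3 without exceeding the step~4 bottleneck; the computations above confirm that those steps still contribute only $2^{O((1/\delta)\log^2(1/\delta))}$.
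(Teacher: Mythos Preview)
Your proof is correct and follows essentially the same approach as the paper: you bound each step separately, invoke Theorem~\ref{thm:thinGen} for the sizes of $m^c$ and $\bar{Q}_s$, Corollary~\ref{cor:vertices} for $|Q_s|$, and Lemma~\ref{lm:fbound} for the coefficient size in the ILP, arriving at the same bottleneck $2^{O((1/\delta)\log^4(1/\delta))}$ in step~4 and the same $\log(n)$ factor from Kannan's algorithm in step~5. The only cosmetic difference is that the paper writes the encoding size in step~5 as $O((1/\delta^2)\log^3(1/\delta)\log(n))$ rather than your looser $\mathrm{poly}(1/\delta)\log(n)$, but this does not affect the final bound.
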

\begin{proof}
In step 1, the algorithm guesses which jobs are processed on machines following a complex configurations. Since each configuration contains at most $O(1/\delta)$ jobs, there are at most $O((1/\delta^4)\log^2 (1/\delta))$ jobs assigned to such machines. For each size $\pi_k \in \Pi$, we guess the number $b_k^c$ of jobs of size $\pi_k$ assigned to such machines. Hence, we can enumerate all possibilities for jobs assigned to complex machines in time $2^{O((1/\delta)\log^2 (1/\e))}$. Similarly, we can guess the number of machines in step 2 since $m_c\in O((1/\delta^3)\log^2 (1/\delta))$. For step 3 we use a simple dynamic program that goes over the machines storing a table $T(\ell,z_1,\ldots,z_d)$ that contains the minimum cost achieved over the first $\ell\le m^c$ machines with $z_k\le b_k^c$ jobs of size $\pi_k$. 
\textcolor{red}{The number of entries the table is $O(m^c\prod_{k=1}^d (b_k^c+1))$. Computing $T(\ell,z_1,\ldots,z_d)$ can be done by checking all entries of the type $T(\ell-1,z_1',\ldots,z_d')$ for $z_k'\le z_k$. Thus, the running time of the dynamic programm is $O(m^c [\prod_{k=1}^d (b_k^c+1)]^2)$.} Since $b_k^c \in O((1/\delta^4)\log^2 (1/\delta))$ for each $k$, recalling that $m^c\in O((1/\delta^3)\log^2 (1/\delta))$, and that $d=|\Pi| \in O((1/\delta) \log(1/\delta))$, we obtain that step 3 can be implemented with $2^{O((1/\delta)\log^2 (1/\delta))}$ running time.

In step 4, our algorithm guesses the support of the solution implied by Theorem~\ref{thm:thinGen}. Let $D$ be the bound implied by the third property of this theorem, so that $|\supp{x}\cap Q_s|\le D$ and $D\in O((1/\delta) \log^2(1/\delta))$. Also, $|Q_s| \le 2^{O(\log^2 (1/\delta))}$ Then the guessing in step 4 needs to consider the following number of possibilities:

\[
\sum_{i=0}^{D} {|Q_s| \choose i} \le (D+1)  |Q_s|^{D} \le 2^{O((1/\delta)\log^4 (1/\delta))}.
\]

In step 5, the number of variables of the restricted ILP is $|\bar{Q}_s| = O((1/\delta)\log^2 (1/\delta))$. Moreover, using Lemma~\ref{lm:fbound} the size of the input can be bounded by $O((1/\delta^2)\log^3(1/\delta)\log(n))$. Running Kannan's algorithm \cite{kannan_minkowskis_1987} to solve the ILP takes time $2^{O((1/\delta)\log^3(1/\delta))}\log(n)$. Hence, the total running time of our algorithm can be bounded by $2^{O((1/\delta)\log^4(1/\delta))}\log(n)$. 
\end{proof}

As in~\cite{alon_approximation_1998}, the algorithm above can be easily adapted for objectives (II), (I') and (III') by suitably adapting the ILP. We leave the details to the reader. This suffices to conclude Theorem~\ref{thm:PTASGen}

\begin{theorem}
\label{thm:PTASGen}
 Consider the scheduling problem on parallel machines with objective functions (I), (II) for $f$ convex (respectively (I') and (II') for $f$ concave). If $f$ satisfies Condition~\ref{cond:sensitivity} for $1/\delta = O(1/\e)$, then the problem admits an EPTAS with running time $2^{O((1/\e)\log^4(1/\e))}+O(n\log n)$.
\end{theorem}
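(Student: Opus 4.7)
The plan is to assemble the pieces already developed for objective (I) and extend them to (II), (I'), (II') via small adaptations of the ILP. I first handle objective (I). Given an instance, I compute the average load $L=\sum_j p_j/m$, place all jobs with $p_j\ge L$ alone on their own machine (this is optimal by the observation cited from Alon et al.), and then apply Theorem~\ref{thm:roundAlon} in linear time to obtain an instance $I'$ with processing times that are integer multiples of $L/\lambda^2$ and satisfy $L/\lambda\le p_j'\le L$. I further round each $p_j'$ geometrically to a power of $(1+\delta)$ and then up to the next multiple of $L/\lambda^2$ to obtain $\bar{p}_j$, as in the discussion preceding Lemma~\ref{lm:makespanGen}; this adds only a multiplicative $(1+\delta)^2$ error, which by Condition~\ref{cond:sensitivity} translates into a $(1+O(\e))$ loss in the objective. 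At this point I set up [cost-conf-IP] on the at most $d=O((1/\delta)\log(1/\delta))$ distinct sizes, using the knapsack polytope with capacity $\bar{T}=4L$ justified by Lemma~\ref{lm:makespanGen}, and appeal to Theorem~\ref{thm:thinGen} and Algorithm~\ref{alg:PTASparallelGen}. By Lemma~\ref{lm:runningtimeGen} this solves the problem in time $2^{O((1/\delta)\log^4(1/\delta))}\log(n)$.

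For the other objectives I adapt the ILP cheaply. For (II), I no longer minimize $\sum_c \bar{f}_c x_c$ but instead enumerate a threshold $F\in\{1,\dots,\max_c\bar f_c\}$ (whose range is $1/\delta^{O(1)}$ by Lemma~\ref{lm:fbound}) and, for each guess, restrict the variable set to $\{c:\bar f_c\le F\}$ before invoking Algorithm~\ref{alg:PTASparallelGen} with a dummy objective; binary search over $F$ returns the minimum feasible $F$. For (I') I keep the sum objective with a $\max$ and note that concavity plays the symmetric role of convexity in the proof of Lemma~\ref{lm:makespanGen}, so $\bar T=4L$ still suffices. For (II') the same threshold enumeration as in (II) applies, now looking for the maximum feasible $F$. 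In all four cases the proof of Theorem~\ref{thm:thinGen} goes through unchanged: it only uses that the contribution of a configuration to the objective depends only on its load, which remains true.

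Substituting $1/\delta=O(1/\e)$ into Lemma~\ref{lm:runningtimeGen}, every call to Algorithm~\ref{alg:PTASparallelGen} takes $2^{O((1/\e)\log^4(1/\e))}\log n$, and the threshold enumeration in (II) and (II') multiplies this by a factor of $1/\delta^{O(1)}=1/\e^{O(1)}$, which is absorbed in the exponent. Adding the $O(n\log n)$ preprocessing (sorting and computing $L$, the rounding of Theorem~\ref{thm:roundAlon}, and the post-processing that lifts the solution back to the original instance using item~4 of that theorem) yields a total running time of $O(n\log n)+2^{O((1/\e)\log^4(1/\e))}\log n$. The standard dichotomy of Theorem~\ref{thm:main_runningtime} finishes the bound: if $n\le 2^{(1/\e)\log^4(1/\e)}$ then $\log n$ is absorbed, otherwise the $2^{O((1/\e)\log^4(1/\e))}\log n$ term is dominated by $O(n\log n)$, giving the claimed $2^{O((1/\e)\log^4(1/\e))}+O(n\log n)$.

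The main technical obstacle is checking that the structural Theorem~\ref{thm:thinGen} really extends to objectives (II), (I'), (II'); the key point is that its proof slices $Q$ into the load classes $Q^\ell$ and applies Theorem~\ref{thm:thin} to each, a decomposition that is oblivious to whether we take a sum or a max, and to the sense of optimization. A minor secondary obstacle is ensuring that the choice $\bar T=4L$ remains valid for the concave objectives; this follows by the symmetric inequality $f(x+\Delta)+f(y-\Delta)\ge f(x)+f(y)$ already invoked in Lemma~\ref{lm:makespanGen}. With these two points verified, the theorem follows by combining the lemmas above.
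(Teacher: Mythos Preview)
Your proposal is correct and follows essentially the same approach as the paper: the paper develops the machinery for objective (I) in detail and then, for Theorem~\ref{thm:PTASGen}, merely states that ``the algorithm above can be easily adapted for objectives (II), (I') and (II') by suitably adapting the ILP'' and leaves the details to the reader. Your write-up simply fills in those details (threshold enumeration for the min--max/max--min objectives, and the concave analogue of Lemma~\ref{lm:makespanGen} for (I')), correctly observing that the load-slicing proof of Theorem~\ref{thm:thinGen} is oblivious to the form of the objective; the final running-time bookkeeping via the dichotomy of Theorem~\ref{thm:main_runningtime} is exactly what the paper does for $P||C_{\max}$.
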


\section{Minimum makespan scheduling on uniform machines}

In this section we generalize our result for $P||C_{\max}$ to uniform machines. Consider a set of jobs ${\cal J}$ with
processing times $p_j$ and a set of $m$ non-identical machines $\mathcal{M}$
where machine $i\in\mathcal{M}$ runs at speed $s_i$. If job $j$ is executed on
machine $i$ the machine needs $p_j/s_i$ time units to complete the
job. The problem is to find an assignment $a: {\cal J} \to {\cal M}$
for the jobs to the machines that minimizes the makespan;
$\max_i \sum_{j : a(j) = i} p_j/s_i$. The problem is denoted by $Q||C_{max}$. We suppose that $s_1 \ge s_2 \ge \ldots \ge s_m$.
Jansen~\cite{jansen_eptas_2010} found an efficient polynomial time approximation scheme (EPTAS) for this scheduling problem which has a running time of $2^{O(1/\e^2 \log^3(1/\e))} + \text{poly}(n)$. Here we show how to improve the running time and prove the main result of this section.

\begin{theorem} \label{thm:QCmax}
There is an EPTAS (a family of algorithms $\{A_\e : \e > 0\}$) which, given an instance $I$ of $Q||C_{max}$ with $n$ jobs and $m$ machines and a positive number $\e > 0$, produces a schedule of makespan $A_\e(I) \le (1+\e) \OPT(I)$. The running time of $A_\e$ is $2^{O(1/\e \log^4(1/\e))} + \textnormal{poly}(n)$.
\end{theorem}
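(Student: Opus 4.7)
The plan is to lift the machinery developed for $P||C_{\max}$ in Theorem~\ref{thm:main_runningtime} to the related machines setting, essentially by following the overall framework of Jansen~\cite{jansen_eptas_2010} but replacing his support-bound argument (which is the source of the $(1/\e)^2$ dependency) by our thin-solutions structural result, Theorem~\ref{thm:thin}. First I would use dual approximation to binary-search for a target makespan $T$ within a factor $(1+\e)$, adding only an $O(\log(1/\e))$ overhead. Then, following Jansen's rounding, I would group machines into speed classes $\mathcal{M}_k$ whose (normalized) speeds lie in an interval of the form $[(1+\e)^k,(1+\e)^{k+1})$, and, relative to each class, classify jobs as big (processing time $\geq \e T s_i$) or small. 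Standard rounding then reduces the problem to deciding feasibility of a rounded instance in which, within each speed class, the number of distinct big-job sizes is $d=O((1/\e)\log(1/\e))$ and the effective knapsack capacity is $O(1/\e^2)$.

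Next I would set up a mixed-integer linear program with integer configuration variables $x^k_c$, one family per speed class $k$, together with continuous variables describing the fractional assignment of small jobs to machines. The key point is that in its \emph{integer part} the MILP decouples across classes: for fixed $k$, the variables $(x^k_c)_c$ satisfy a system of the type \eqref{eq:ConicComb}--\eqref{eq:NumberConf} over a knapsack polytope with $d=O((1/\e)\log(1/\e))$ and $T=O(1/\e^2)$. I can therefore apply Theorem~\ref{thm:thin} \emph{independently} to each speed class, obtaining a thin substructure in every class simultaneously, so that in every class at most $O((1/\e)\log^2(1/\e))$ machines run complex configurations and the total integer support is $O((1/\e)\log^2(1/\e))$.

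The algorithmic strategy then mirrors Algorithm~\ref{alg:PTASparallelCmax}, applied class by class: for each $k$, guess the multiplicity vector of big jobs of each size scheduled by machines using complex configurations (a quasi-polynomial number of choices, since the total multiplicity per class is $O((1/\e)\log^2(1/\e))$), compute an optimal assignment of those jobs by a dynamic program as in Lemma~\ref{lm:runningtime}, guess the support of simple configurations actually used, and solve the remaining integer system restricted to the guessed simple configurations with Kannan's algorithm~\cite{kannan_minkowskis_1987}. The bounds $|Q_s|\le 2^{O(\log^2(1/\e))}$ from Corollary~\ref{cor:vertices} and $|\supp{x}|\le O((1/\e)\log^2(1/\e))$ give $2^{O((1/\e)\log^4(1/\e))}$ support-guesses per class, and Kannan's algorithm on $O((1/\e)\log^2(1/\e))$ variables runs in $2^{O((1/\e)\log^3(1/\e))}\,\textnormal{poly}(\text{input})$ time. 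Small jobs are then placed by solving the associated assignment LP and rounding, as in~\cite{jansen_eptas_2010}.

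The main obstacle is the interaction \emph{between} speed classes, because a job that is big on a slow machine may be small on a faster one, so the big/small partition is class-dependent and the MILP only decouples after the case analysis of Jansen~\cite{jansen_eptas_2010} is in place. I would import that case analysis verbatim: the stratified rounding of speeds and processing times keeps the overall number of distinct sizes across all classes $O((1/\e)\log(1/\e))$, the number of ``non-trivial'' speed classes one has to consider per case is $O(\log(1/\e))$, and the fractional part (small jobs and the between-class coupling) is handled by an LP whose number of non-trivial integer constraints after applying the thin decomposition is $O((1/\e)\log^2(1/\e))$. Multiplying the per-class cost by the number of classes and the number of outer guesses keeps everything within $2^{O((1/\e)\log^4(1/\e))} + \textnormal{poly}(n)$, which is the bound stated in Theorem~\ref{thm:QCmax}.
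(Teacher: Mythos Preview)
Your high-level plan---follow Jansen's framework and replace his support argument by Theorem~\ref{thm:thin}---is exactly what the paper does. But two points in your accounting are off, and together they hide the one extra idea that is needed.

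First, in Jansen's decomposition the second bin group $\mathcal{B}_2$ has $L = O((1/\e)\log(1/\e))$ distinct capacities (blocks), not $O(\log(1/\e))$ as you state. Second, and more importantly, applying Theorem~\ref{thm:thin} \emph{independently per block} only gives you a per-block support of $O((1/\e)\log^2(1/\e))$; summed over $L$ blocks this is $O((1/\e)^2\log^3(1/\e))$ integer variables in the support. Guessing that support costs $|Q_s|^{O((1/\e)^2\log^3(1/\e))}=2^{O((1/\e)^2\mathrm{polylog}(1/\e))}$, which is exactly the $(1/\e)^2$ you were trying to avoid. Your last paragraph (``multiplying the per-class cost by the number of classes'') glosses over the fact that the guesses across blocks are not independent---they must all be fixed before you solve the single coupled MILP---so the number of guesses multiplies, which means the exponents add.

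The paper's fix is a two-stage argument (Lemma~\ref{thin:solution}): first apply Theorem~\ref{thm:thin} block by block, which controls the \emph{complex} configurations (at most one use each; their total number $O((1/\e)^2\log^3(1/\e))$ is absorbed by the dynamic program); then, with all complex-configuration variables frozen, apply the Eisenbrand--Shmonin bound \emph{globally} to the remaining ILP over simple configurations across \emph{all} blocks. Because the global ILP still has only $O((1/\e)\log(1/\e))$ rows and $O(\log(1/\e))$-bit coefficients, this cuts the total simple-configuration support back down to $O((1/\e)\log^2(1/\e))$, and only then does the support-guessing step fit in $2^{O((1/\e)\log^4(1/\e))}$. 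Your proposal is missing this global ES pass; without it, the running time does not follow.
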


We follow the approach by Jansen~\cite{jansen_eptas_2010}, transforming the scheduling problem into a bin packing problem with different bin capacities, round the processing times and bin capacities, divide the bins into at most three groups and generate four different scenarios depending on the input instance.

First, we compute a $2$-approximate solution using the algorithm by Gonzales et al. \cite{gonzales} of length $B(I) \le 2 \, \OPT(I)$. Suppose that $\e < 1$; otherwise we can take the $2$-approximate solution and are done. Then we choose a value $\delta \in (0,\e)$ such that $1/\delta$ is integral (the exact value is specified later) and use a binary search within the interval $[B(I)/2,B(I)]$ (that contains $\OPT(I)$). We use a standard dual approximation method that for each value $T$ either computes an approximate schedule of length $T(1+ a \delta)$ (where $a$ is constant) or shows that there is no schedule of length $T$.
Since $(\delta/2) B(I) \le \delta \OPT(I)$, we can find within $O(\log(1/\delta))$ iterations a value $T \le \OPT(I)(1+\delta)$ with a corresponding schedule of length at most $T(1+a\delta) \le \OPT(I) (1+\e)$, using $\delta \le \e/(a+2)$ and $\e \le 1$. Next, the scheduling problem is transformed into a bin packing problem with $m$ bins and capacities $c_i = T \cdot s_i$, the processing times $p_j$ are rounded to the next value $\bar{p}_j$ of the form $\delta (1+\delta)^{k_j}$ with $k_j \in \gz$ and the bin capacities are rounded to the next power of $(1+\delta)$. We call $\mathcal{B}$ the set of bins with rounded capacities.

\begin{lemma}[Jansen \cite{jansen_eptas_2010}]
If there is a feasible packing of $n$ jobs with processing times $p_j$
into $m$ bins with capacities $c_i$, then there is also a packing of $n$ jobs with rounded processing times $\bar{p}_j = \delta (1+\delta)^{k_j} \le (1+\delta)p_j$ into $m$ bins with rounded bin capacities $c'_i = (1+\delta)^{\textcolor{red}{\ell_i}} \le c_i (1+\delta)^2$ with $\ell_i \in \mathbb{Z}$.
\end{lemma}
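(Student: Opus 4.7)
The plan is to take the given feasible packing, round the job sizes up coordinate-wise, and then choose the rounded capacities just large enough to absorb the rounding error, while keeping the very same assignment.

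First, I would fix a feasible assignment $a:\mathcal{J}\to \mathcal{M}$ of jobs to bins such that $\sum_{j:\,a(j)=i} p_j \le c_i$ for every $i\in\mathcal{M}$. For each job $j$, since $p_j>0$ there is a unique integer $k_j\in\mathbb{Z}$ with $\delta(1+\delta)^{k_j-1} < p_j \le \delta(1+\delta)^{k_j}$; define $\bar{p}_j := \delta(1+\delta)^{k_j}$. By construction, $p_j \le \bar{p}_j \le (1+\delta)p_j$, which is the desired form.

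Next, I would keep the same assignment $a$ and bound the new load per bin. Using $\bar{p}_j\le (1+\delta)p_j$ and the feasibility of $a$,
\begin{equation*}
\sum_{j:\,a(j)=i} \bar{p}_j \;\le\; (1+\delta)\sum_{j:\,a(j)=i} p_j \;\le\; (1+\delta)c_i.
\end{equation*}
It remains to round the capacities to powers of $(1+\delta)$. I would set $\ell_i := \bigl\lceil \log_{1+\delta}\bigl((1+\delta)c_i\bigr)\bigr\rceil \in \mathbb{Z}$ and $c'_i := (1+\delta)^{\ell_i}$. By the choice of $\ell_i$ we have $(1+\delta)c_i \le c'_i \le (1+\delta)\cdot(1+\delta)c_i = (1+\delta)^2 c_i$, which matches the claimed bound. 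Combining with the previous display, $\sum_{j:\,a(j)=i} \bar{p}_j \le (1+\delta)c_i \le c'_i$, so the assignment $a$ is a feasible packing of the rounded jobs into the rounded bins.

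There is no real obstacle here: it is a purely two-step rounding argument (jobs first, then capacities), where the factor $(1+\delta)^2$ is precisely the product of the two rounding losses. The only subtlety worth mentioning is that the job-size rounding must be done before fixing the capacity levels $\ell_i$, since the capacity rounding has to be large enough to cover the inflated load $(1+\delta)c_i$ rather than just $c_i$; this is what forces the second factor of $(1+\delta)$ and explains the exponent $2$ in the bound on $c'_i$.
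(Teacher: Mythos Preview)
Your argument is correct and is the standard two-step rounding proof: round each $p_j$ up to the nearest value of the form $\delta(1+\delta)^{k_j}$, which inflates loads by at most a factor $(1+\delta)$, and then round each capacity up to the nearest power of $(1+\delta)$ that dominates $(1+\delta)c_i$, picking up one more factor of $(1+\delta)$. Note, however, that the paper does not actually give a proof of this lemma; it merely quotes it from Jansen~\cite{jansen_eptas_2010}, so there is no ``paper's own proof'' to compare against. Your write-up supplies exactly the missing routine verification.
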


If the number $m$ of bins is smaller than $K \in O(1/\delta \log(1/\delta))$,
then we can use an approximation scheme by Jansen and Mastrolilli \cite{monaldo} to compute an $(1+\e)$-approximate solution to schedule $n$ jobs on $m$ unrelated machines (an even more general problem) within time $O(n) + 2^{O(m \log(m/\e))} = O(n) + 2^{O(1/\delta \log^2(1/\delta))} = O(n) + 2^{O(1/\e \log^2(1/\e))}$; using that $\delta \in O(\e)$. Suppose from now on that $K > O(1/\delta \log(1/\delta))$. Then, we divide the bins into at most three different bin groups.
The first group ${\cal B}_1$ consists of the $K = O(1/\delta \log(1/\delta))$ largest bins. For some $\gamma \in \Theta(\e^2)$, the next group ${\cal B}_2$ consists either of all the remaining bins $\{b_{K+1},\ldots,b_m\}$ if $c'_m > \gamma c'_K$ (and we have only two bin groups)
or ${\cal B}_2$ contains the next $G$ largest bins $\{b_{K+1},
\ldots,b_{K+G}\}$ where $G$ is the smallest index such that capacity $c'_{K+G+1} \le \gamma c'_K$.
In the second case, ${\cal B}_3 = \{b_{K+G+1},\ldots,b_m\}$. Let $c_{max}({\cal B})$ and $c_{min}({\cal B})$ be the largest and smallest bin capacity in ${\cal B}$. If $c_{max}({\cal B})/c_{min}({\cal B}) \le C$ for some value $C$ and ${\cal B}$ contains only rounded capacities $(1+\delta)^x$ with $x \in \gz$, then the number of different capacities
in ${\cal B}$ is at most $O(1/\delta \log(C))$.

\begin{lemma}[Jansen~\cite{jansen_eptas_2010}]
If there is a solution for the original instance $({\cal J}, {\cal M})$
of our scheduling problem with makespan $T$ and corresponding bin sizes, then there is a feasible packing for instance $({\cal J},{\cal B}'_1 \cup {\cal B}_2 \cup {\cal B}_3)$ or instance $({\cal J},{\cal B}'_1 \cup {\cal B}_2)$ with rounded bin capacities $\bar{c}_i \le c_i(1+\delta)^3$
and rounded processing times $\bar{p}_j \le (1+\delta) p_j$. Here ${\cal B}'_1$ is the subset of ${\cal B}_1$ with bins of capacity larger than $\delta/(K-1) c_{max}({\cal B}_1)$ and ${\cal B}_2$ has a constant number $O(1/\delta \log(1/\delta))$ of different bin capacities. In addition we have one of the following four scenarios:

\begin{description}
\item[(1)] Two bin groups ${\cal B}'_1$ and ${\cal B}_2$ with a gap $c_{min}({\cal B}'_1)/c_{max}({\cal B}_2)  \ge 1/\delta$.

\item[(2)] Two bin groups ${\cal B}_1$ and ${\cal B}_2$ with a constant number $O(1/\delta \log(1/\delta))$ of different bin capacities
    in ${\cal B}'_1 \cup {\cal B}_2$.

\item[(3)] Three bin groups ${\cal B}'_1,{\cal B}_2,{\cal B}_3$ with a gap
    $c_{min}({\cal B}'_1)/c_{max}({\cal B}_2) \ge 1/\delta$ and
    $c_{min}({\cal B}_1)/c_{max}({\cal B}_3) \ge 1/\gamma$.

\item[(4)] Three bin groups ${\cal B}'_1,{\cal B}_2,{\cal B}_3$ with a  constant number $O(1/\delta \log(1/\delta))$ of different bin capacities in ${\cal B}'_1 \cup {\cal B}_2$ and a gap
     $c_{min}({\cal B}_1)/c_{max}({\cal B}_3) \ge 1/\gamma$.
\end{description}
\end{lemma}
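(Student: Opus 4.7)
The plan is to transform the original packing in three stages, each budgeting at most one factor of $(1+\delta)$ on the bin capacities, and at the end to verify that the resulting instance falls into one of the four prescribed scenarios. The first stage is simply to invoke the preceding rounding lemma, which delivers a packing of the rounded jobs $\bar{p}_j \le (1+\delta)p_j$ into bins of rounded capacities $c'_i = (1+\delta)^{\ell_i} \le (1+\delta)^2 c_i$; this already accounts for two of the three $(1+\delta)$ factors claimed in the bound $\bar{c}_i \le c_i(1+\delta)^3$.

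The second stage is the key combinatorial move of discarding the bins in $\mathcal{B}_1 \setminus \mathcal{B}'_1$, i.e., those bins of $\mathcal{B}_1$ of capacity at most $\delta/(K-1) \cdot c_{\max}(\mathcal{B}_1)$. There are at most $K-1$ such bins, so their combined capacity is at most $\delta \cdot c_{\max}(\mathcal{B}_1)$. I would reassign all of the jobs they carry to the largest bin $b_1$ of $\mathcal{B}_1$; then the load of $b_1$ grows by at most $\delta \cdot c_{\max}(\mathcal{B}_1)$, which is absorbed by enlarging the capacity of $b_1$ by one further factor $(1+\delta)$. This is precisely where the third factor in $(1+\delta)^3$ is spent, and the threshold $\delta/(K-1)$ is tuned so that all $K-1$ discarded bins collectively fit within this single budget.

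The third stage is a case analysis that classifies the resulting instance into one of the four scenarios. The first dichotomy is on the construction of the bin groups: if $c'_m > \gamma c'_K$, we keep only two groups with $\mathcal{B}_2 = \{b_{K+1}, \ldots, b_m\}$; otherwise we let $\mathcal{B}_3 = \{b_{K+G+1}, \ldots, b_m\}$ as defined in the excerpt, and the inequality $c'_{K+G+1} \le \gamma c'_K$ immediately yields the gap $c_{\min}(\mathcal{B}_1)/c_{\max}(\mathcal{B}_3) \ge 1/\gamma$ required by scenarios 3 and 4. In both cases $c_{\max}(\mathcal{B}_2)/c_{\min}(\mathcal{B}_2) \le 1/\gamma$ by construction, so after rounding to powers of $(1+\delta)$ the group $\mathcal{B}_2$ has at most $O(\log_{1+\delta}(1/\gamma)) = O(1/\delta \log(1/\delta))$ distinct capacities. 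The second dichotomy is whether the gap $c_{\min}(\mathcal{B}'_1)/c_{\max}(\mathcal{B}_2) \ge 1/\delta$ holds: if yes, we land in scenario 1 or 3; if no, the bound $c_{\max}(\mathcal{B}'_1)/c_{\min}(\mathcal{B}'_1) \le (K-1)/\delta$ from the definition of $\mathcal{B}'_1$, combined with the failed gap and the ratio inside $\mathcal{B}_2$, bounds the overall ratio in $\mathcal{B}'_1 \cup \mathcal{B}_2$ by a polynomial in $1/\delta$, so rounding to powers of $(1+\delta)$ still leaves only $O(1/\delta \log(1/\delta))$ distinct capacities and places us in scenario 2 or 4.

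The main obstacle is the second stage: one must verify that dumping the entire load of up to $K-1$ discarded bins onto the single largest bin genuinely costs only one extra $(1+\delta)$ factor, in spite of the fact that $K \in \Theta(1/\delta \log(1/\delta))$ grows with $1/\delta$. The tight calibration of the threshold $\delta/(K-1)$ against the number of discarded bins is what makes this work; any larger threshold would break the $(1+\delta)^3$ approximation bound. The remaining case analysis in the third stage is then essentially a bookkeeping exercise in counting distinct rounded capacities.
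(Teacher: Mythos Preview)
The paper does not provide its own proof of this lemma; it is stated with attribution to Jansen~\cite{jansen_eptas_2010} and used as a black box, so there is nothing in the present paper to compare your argument against line by line. That said, your three-stage outline is the natural reconstruction of the argument from the surrounding setup and appears sound: the first rounding lemma gives two $(1+\delta)$ factors, the reassignment of the at most $K-1$ small bins of $\mathcal{B}_1\setminus\mathcal{B}'_1$ into the largest bin contributes the third factor (and the threshold $\delta/(K-1)$ is indeed calibrated exactly for this), and the two dichotomies---presence or absence of $\mathcal{B}_3$, and whether the $1/\delta$ gap between $\mathcal{B}'_1$ and $\mathcal{B}_2$ holds---cleanly sort the instance into one of the four scenarios.

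One small point worth tightening in your write-up: when you assert $c_{\max}(\mathcal{B}_2)/c_{\min}(\mathcal{B}_2)\le 1/\gamma$, in the three-group case this relies on the minimality of $G$ (so that $c'_{K+G}>\gamma c'_K$), and in the two-group case on the hypothesis $c'_m>\gamma c'_K$; you use both facts implicitly but it would be clearer to state them. Also, in the ``no gap'' branch, the ratio bound $c_{\max}(\mathcal{B}'_1)/c_{\min}(\mathcal{B}_2)$ you derive is of order $(K-1)/(\delta^2\gamma)$, which with $K\in O((1/\delta)\log(1/\delta))$ and $\gamma\in\Theta(\delta^2)$ is indeed polynomial in $1/\delta$, yielding the claimed $O((1/\delta)\log(1/\delta))$ distinct rounded capacities in $\mathcal{B}'_1\cup\mathcal{B}_2$. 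With these clarifications your sketch is a correct proof of the cited lemma.
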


Notice that scenario $4$ can be seen as a special case of scenario $3$
by using ${\cal B}_2^{new} = {\cal B}_1 \cup {\cal B}_2$, ${\cal B}_1^{new} = \emptyset$, and ${\cal B}_3^{new} = {\cal B}_3$. The same modification works to show that scenario $2$ is a special case of scenario $1$. Finally, scenario $1$ can be interpreted a special case
of scenario $3$, using ${\cal B}_3 = \emptyset$. Therefore, it is sufficient to improve the running time for scenario $3$.

Scenario $3$ will be solved using a mix of dynamic programming and mixed integer linear programming (MILP) techniques. In this approach we use only the larger bins in ${\cal B}'_1$ of $\mathcal{B}_1$ to execute jobs, but in the rounding step for scenario $3$ afterwards we may also use the smaller bins in ${\cal B}_1$. Notice that a packing into a bin $b_i$ with capacity $\bar{c}_i \le c_i(1+\delta)^3$ corresponds to a schedule on machine $i$ with total processing time at most $\bar{c}_i/s_i \le
T(1+\delta)^3$. For $T \le \OPT(I)(1+\delta)$ this gives us a schedule of length at most $\OPT(I) (1+\delta)^4$.  If there a feasible schedule with makespan $T$, then the total processing time of the instance is $\sum_{j \in {\cal J}} p_j \le \sum_{i=1}^m \bar{c}_i$. If this inequality does not hold, then we discard the choice with makespan $T$. Otherwise, we can eliminate the set ${\cal J}_{\text{tiny}}$ of tiny jobs with processing time $\le \delta \bar{c}_m$ and pack them greedily at the end of the algorithm into the enlarged bins of size $\bar{c}_i (1+\delta)$. Hence, in what follows we assume that ${\cal J}_{\text{tiny}}$ is empty.

\subsection{Solution for the instance $({\cal J},{\cal B}'_1 \cup {\cal B}_2 \cup {\cal B}_3)$}

In this subsection we consider scenarios $3$ above with three bin groups.
First, we preassign all huge jobs
with processing time $> \delta \bar{c}_{K'}$ into the first $K'$ machines.
Since $\bar{c}_{K+1} = c_{max}({\cal B}_2)  \le \delta c_{min} ({\cal B}'_1) = \delta \bar{c}_{K'}$, the huge jobs fit only on the first $K'$ bins.
The number of huge jobs can be bounded by $O(K c_{max}({\cal B}_1) / (\delta c_{K'})) = O(1/\delta^4 \log^2(1/\delta))$. If there are more huge jobs in the instance, then there is no packing into ${\cal B}'_1 \cup {\cal B}_2$ and we are done.
Furthermore, the number of machines $K' \in O(1/\delta \log(1/\delta))$ is constant. Again, we can use the approximation scheme by Jansen and Mastrolilli 
that computes an $(1+\delta)$-approximate schedule for $N$ jobs on $M$ machines which runs in $O(N) + 2^{O(M \log(M/\delta))}$. For $M \in O(1/\delta \log(1/\delta))$ and $N \in O(1/\delta^4 \log^2(1/\delta))$ this gives a running time
$O(1/\delta^4 \log^2(1/\delta)) + 2^{O(1/\delta \log^2(1/\delta))} = 2^{O(1/\delta \log^2(1/\delta))}$ to obtain a feasible packing with bin sizes $\bar{c}_i (1+\delta)$ or schedule of length $\le T(1+\delta)^4$, if one exists.
If there is no feasible packing for the huge jobs, then there is no schedule with makespan $T$ and we have to increase $T$ in the binary search. In the other case we set up a MILP.

After the assignment of the huge jobs, we have a free area $S_0$ in ${\cal B}_1$  for the remaining jobs with processing time $\bar{p}_j \le \delta \bar{c}_{K'}$. The different bin capacities in ${\cal B}_2$ and ${\cal B}_3$ are denoted by $\bar{c}(1) > \ldots > \bar{c}(L)$ and $\bar{c}(L+1) > \ldots > \bar{c}(L+N)$, respectively. Let $m_\ell$ be the number of bins of size $\bar{c}(\ell)$ for $\ell = 1,\ldots,L+N$. The $m_\ell$ machines of the same speed form a block $B_\ell$ of bins with the same capacity $\bar{c}(\ell)$. In addition, we have $n_1,\ldots,n_P$ jobs of size $\delta (1+\delta)^{k_j}$ and suppose that the first $P' \le P$ job sizes are larger than $\bar{c}_{K+1} = \bar{c}(1)$.

In the MILP we use $C_1^{(\ell)}, \ldots,C_{h_\ell}^{(\ell)}$ as configurations or multisets with numbers $\delta (1+\delta)^{k_j} \in [\delta \bar{c}(\ell),\bar{c}(\ell)]$ (these are large processing times corresponding to block $B_\ell$), where the total sum $size(C_i^{(\ell)}) = \sum_j a(k_j,C_i^{(\ell)}) \delta (1+\delta)^{k_j}$ is bounded by $\bar{c}(\ell)$. Here $a(k_j,C_i^{(\ell)})$ is the number of occurrences of number $\delta (1+\delta)^{k_j}$ in configuration $C_i^{(\ell)}$. In the MILP below, we use integral and fractional variables $x_i^{(\ell)}$ to indicate number of machines that are scheduled according to configuration $C_i^{(\ell)}$. In addition, we use fractional variables $y_{j,\ell}$ to indicate the number of jobs of size $\delta(1+\delta)^{k_j}$ placed as small ones in block $B_\ell$; i.e. $\delta (1+\delta)^{k_j} < \delta \bar{c}(\ell)$.
For each job size $\delta (1+ \delta)^{k_j} \le \bar{c}(1)$, let $a_j$ be the
smallest index in $\{1,\ldots,L+N\}$ such that $\delta (1+\delta)^{k_j} \ge
\delta \bar{c}(a_j)$. If there is no such index, we have a tiny processing
time $\delta (1+\delta)^{k_j} < \delta \bar{c}(L+N)$. Notice that the first $P'$
job sizes are within $( {\bar c}(1),\delta c_{K'} ]$. These jobs do not fit into
${\cal B}_2 \cup {\cal B}_3$. Therefore, for these job sizes we use only one
variable $y_{j,0} = n_j$ and set $a_j = 0$.

$$ \begin{array}{llll}
 \sum_{i} x_i^{(\ell)} & \le m_\ell & {\rm for \ } & \ell=1,\ldots,L+N, \\
 \sum_{\ell,i} a(k_j,C_i^{(\ell)}) x_i^{(\ell)} + \sum_{\ell=0}^{{a_j}-1} y_{j,\ell} & = n_j & {\rm for \ } & j=P'+1,\ldots,P ,\\
 \sum_{i} size(C_i^{(\ell)})x_i^{(\ell)} + \sum_{j:\ell < a_j} y_{j,\ell} \delta(1+\delta)^{k_j} & \le m_\ell \bar{c}(\ell) & {\rm for \ } & \ell=1,\ldots,L+N, \\
 \sum_{j=1}^P y_{j,0} \delta (1+\delta)^{k_j} & \le S_0, & & \\
 \end{array} $$

$$ \begin{array}{lll}
 x_i^{(\ell)} \, \text{integral} \ge 0 & {\qquad } &  {\rm for \ } \ell=1,\ldots,L
           {\rm \ and \ } i=1,\ldots,h_\ell, \\
 x_i^{(\ell)} \ge 0  & & {\rm for \ } \ell=L+1,\ldots,L+N {\rm \ and \ }
                          i=1,\ldots,h_\ell \\
 y_{j,0} = n_j    & {\qquad } &  {\rm for \ } j=1,\ldots,P', \\
 y_{j,\ell} \, \text{integral} \ge 0  & {\qquad } &  {\rm for \ } j=P'+1,\ldots,P
   {\rm \ and \ }  \ell=0,\ldots,a_j-1.\\
 \end{array} $$

In the MILP above, we use integral variables for configurations in the blocks of group ${\cal B}_2$ and fractional variables for the configurations in blocks of ${\cal B}_3$.  Each feasible packing for the jobs into the bins corresponds to a feasible solution of the MILP. The total number of variables is $O(n^2) + O(n) 2^{O(1/\delta \log(1/\delta))}$, the number of integral variables is at most $2^{O(1/\delta \log(1/\delta))}$, and the number of constraints (not counting the non-negativity constraints) is at most $O(n)$.
The previous approach to solve the scheduling problem and the underlying MILP had a running time of $2^{O(1/\delta^2 \log^3(1/\delta))} + \text{poly}(n)$.
In order to use an approach similar to the scheduling on identical machines, each large size $\delta (1+\delta)^{k_j} \in C_i^{(\ell)}$ is rounded up to the next multiple of $\delta^2 \bar{c}(\ell)$. This enlarges the size of each configuration $C_i^{(\ell)}$ from $size(C_i^{(\ell)})$ to at most $size(C_i^{(\ell)}) + \delta \bar{c}(\ell)$ and the corresponding bin size from $\bar{c}(\ell)$ to $(1+\delta) \bar{c}(\ell)$.

Let $\bar{C}_1^{(\ell)},\ldots, \bar{C}_{\bar{h}_\ell}^{(\ell)}$ be the configurations of size at most $(1+\delta) \bar{c}(\ell)$ with the rounded-up numbers $q(k_j,\ell) \delta^2 \bar{c}(\ell)$ with $q(k_j,\ell) \in \gz^+$ and multiplicities $a(k_j,\bar{C}_i^{(\ell)})$. This rounding implies also that the rounded size $size(\bar{C}_i^{(\ell)})$ of a configuration is a multiple of $\delta^2 \bar{c}{(\ell)}$.
Each new rounded configuration $\bar{C}_i^{(\ell)}$ (with rounded-up numbers $q(k_j,\ell) \delta^2 \bar{c}(\ell)$ and multiplicities $a(k_j,\bar{C}_i^{(\ell)})$) corresponds to an integral point inside the knapsack
polytope $\mathcal{P}_\ell = \{C = (a(k_j,C)) \,:\, q \cdot C \le 1/\delta^2 + 1/\delta\}$ 
such that $\sum_j  q(k_j,\ell) a(k_j,\bar{C}_i^{(\ell)}) \delta^2 \bar{c}(\ell) =
size(\bar{C}_i^{(\ell)}) \le (1+\delta) \bar{c}(\ell)$
or, equivalently,
$\sum_j q(k_j,\ell) a(k_j,\bar{C}_i^{(\ell)})  \le 1/\delta^2 + 1/\delta \le 2/\delta^2$. We consider now a modified MILP with configurations $\bar{C}_i^{(\ell)}$ and coefficients $a(k_j,\bar{C}_i^{(\ell)})$.
Note that the total area of all configurations in $B_\ell$ can be bounded by
$\sum_i size(\bar{C}_i^{(\ell)})  x_i^{(\ell)}
 \le \sum_i size(C_i^{(\ell)})  x_i^{(\ell)} + \delta \bar{c}(\ell) \sum_i x_i^{(\ell)}$.
This, together with the small jobs gives
$ \sum_{i} size(\bar{C}_i^{(\ell)}) x_i^{(\ell)} + \sum_j \delta (1+\delta)^{k_j} y_{j,\ell} \le \sum_{i} size(C_i^{(\ell)}) x_i^{(\ell)} + \delta m_\ell \bar{c}(\ell) +  \sum_j \delta (1+\delta)^{k_j}
y_{j,\ell} \le m_\ell \bar{c}(\ell) (1 +\delta);$
i.e. the total area is increased by at most a multiplicative
factor of $(1+\delta)$.
Since the total area of all jobs within one block is increased by this rounding, we use the following new constraints in the modified MILP:
\begin{align*}
\sum_i size(\bar{C}_i^{(\ell)}) x_i^{(\ell)} + \sum_j y_{j,\ell} \delta (1+\delta)^{k_j} \le m_\ell \bar{c}(\ell) (1+\delta) & \qquad {\rm for \ } \ell=1,\ldots,L.
\end{align*}
Next, we divide the coefficients in the $L$
area constraints above by $\delta^2 \bar{c}(\ell)$. Then the coefficients
of the $x_i^{(\ell)}$ variables are now $size(\bar{C}_i^{(\ell)})/(\delta^2 \bar{c}(\ell))
= a_{i,\ell} \delta^2 \bar{c}(\ell) /(\delta^2 \bar{c}(\ell)) = a_{i,\ell}
\in \{1/\delta,\ldots,1/\delta^2 + 1/\delta\}$. Using the assumption that $1/\delta$ is integral, all coefficients of the variables are integral and bounded by $2/\delta^2$. Notice that increasing the capacities of all bins and dividing all coefficients as above, implies also a feasible solution of the modified MILP.
Let us study a feasible solution of the modified MILP.
To reduce the number of integral configuration variables in the MILP, we consider the following ILP that uses only the integral $x_i^{(\ell)}$ variables within bin group ${\cal B}_2$:

 \begin{align}
 \label{eq:ILP1}
 & \sum_{i} x_i^{(\ell)} = \bar{m}_\ell & {\rm for \ } & \ell=1,\ldots,L, \\
 \label{eq:ILP2}
 & \sum_{\ell,i} a(k_j,\bar{C}_i^{(\ell)}) x_i^{(\ell)}  = \bar{n}_j & {\rm for \ } & j \in P({\cal B}_2) ,\\
 \label{eq:ILP3}
 & \sum_{i} \frac{size(\bar{C}_i^{(\ell)})}{\delta^2 \bar{c}(\ell)}  x_i^{(\ell)} = Area(\ell,large) & {\rm for \ } & \ell=1,\ldots,L, \\
 \label{eq:ILP4}
 & x_i^{(\ell)} \, \text{integral}  \ge 0 &   {\rm for \ } & i=1,\ldots,\bar{h}_\ell,\ell=1,\ldots,L.
 \end{align} 

where the values $\bar{m}_\ell$, $\bar{n}_j$, and $Area(\ell,large)$ are
given by a feasible solution of the modified MILP. Here $P({\cal B}_2)$ is the set of all indices of large job sizes corresponding to blocks $B_\ell \in {\cal B}_2$; i.e. $P({\cal B}_2) = \{j \,:\, \delta(1+\delta)^{k_j} \in (\delta \bar{c}(L),\bar{c}(1)]\}$. The cardinality of $P({\cal B}_2)$ and the value $L$ can be
bounded by $O(1/\delta \log(1/\delta))$. All the coefficients above of the variables are bounded by $O(1/\delta^2)$.

The support of a configuration $\bar{C}_i^{(\ell)}$ is the number of values $a(k_j,\bar{C}_i^{(\ell)}) > 0$; i.e.
$\supp{\bar{C}_i^{(\ell)}} = |\{j\, :\, a(k_j,\bar{C}_i^{(\ell)}) > 0\}|$. In our case $\supp{\bar{C}_i^{(\ell)})} \le O(1/\delta \log(1/\delta))$.  A configuration $\bar{C}_i^{(\ell)}$ is called simple, if $|\supp{\bar{C}_i^{(\ell)}}| \le \log(1/\delta^2 + 1 / \delta + 1)$ 
Otherwise, we call a configuration $\bar{C}_i^{(\ell)}$ complex. Using the result by Eisenbrand and Shmonin, we can find a feasible solution of the ILP above (if there is a feasible solution of the modified MILP) with at most $O(1/\delta \log^2(1/\delta))$ many variables $x_i^{(\ell)} > 0$; i.e. $|\supp{x}| \le O(1/\delta \log^2(1/\delta))$ where $x = (x_i^{(\ell)})$. We can generalize our result  in Theorem~\ref{thm:thin} to our ILP above.

\begin{lemma} \label{thin:solution}
Assume that the ILP defined by \eqref{eq:ILP1}-\eqref{eq:ILP4} is feasible and let $S$ denote the set of all simple configurations. Then there exists a feasible solution $x'$ such that:
\begin{description}
\item[(1)] If ${x'}_i^{(\ell)} > 1$ then the configuration $\bar{C}_i^{(\ell)}$ is simple.

\item[(2)] The support of $x'$ satisfies $|\supp{x'} \cap S| \in O(1/\delta \log^2(1/\delta))$.

\item[(3)]The support of $x'$ satisfies $|\supp{x'} \setminus S| \in O(1/\delta^2 \log^3(1/\delta))$.

\end{description}
\end{lemma}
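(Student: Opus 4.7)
The plan is to mimic the proof of Theorem~\ref{thm:thinGen}: decompose the ILP \eqref{eq:ILP1}--\eqref{eq:ILP4} block by block and apply Theorem~\ref{thm:thin} inside each block to control the complex configurations, then tighten the simple support globally by applying the Eisenbrand-Shmonin argument to the simple-restricted system.

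First, I would fix any feasible solution $x^*$ of the ILP and, for each block $\ell\in\{1,\ldots,L\}$, form a sub-ILP in the variables $\{x_i^{(\ell)}\}_i$ alone. Its constraints force the machine count in block $\ell$ to equal $\bar m_\ell$ and, for each job size $j$ with $\delta(1+\delta)^{k_j}\in[\delta\bar c(\ell),\bar c(\ell)]$, force the contribution to that job-count constraint to equal the one attained by $x^*$ restricted to block $\ell$. After rescaling by $\delta^2\bar c(\ell)$, the configurations $\bar C_i^{(\ell)}$ become integer points in a knapsack polytope with integer capacity $T_\ell=1/\delta^2+1/\delta=O(1/\delta^2)$ and dimension $d_\ell\le|P({\cal B}_2)|=O((1/\delta)\log(1/\delta))$. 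Because $x^*$ restricted to block $\ell$ is feasible for this sub-ILP, Theorem~\ref{thm:thin} produces a thin solution $x^{(\ell)}$ to each sub-ILP.

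Concatenating the $x^{(\ell)}$ yields a feasible $\tilde x$ for the original ILP: the machine constraints \eqref{eq:ILP1} hold block by block, summing the contributions across blocks recovers the job constraints \eqref{eq:ILP2} from the feasibility of $x^*$, and the area constraint \eqref{eq:ILP3} is a linear combination of the job constraints and is automatically preserved. Property~(1) is inherited directly from each $x^{(\ell)}$. Moreover, the complex support inside block $\ell$ is at most $4(d_\ell+1)\log(4(d_\ell+1)T_\ell)=O((1/\delta)\log^2(1/\delta))$; summing over the $L=O((1/\delta)\log(1/\delta))$ blocks establishes property~(3), namely $|\supp{\tilde x}\setminus S|\in O((1/\delta^2)\log^3(1/\delta))$.

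To obtain property~(2), I would freeze the complex portion of $\tilde x$, subtract its contribution from each right-hand side, and regard what remains as an integer system in the simple configuration variables only. This system has $L+|P({\cal B}_2)|=O((1/\delta)\log(1/\delta))$ equations with integer coefficients bounded by $O(1/\delta^2)$, so the pigeonhole argument of Lemma~\ref{lm:2subsets} combined with the support-reducing step of Lemma~\ref{lm:ES} applies verbatim and produces an integer solution whose support has size $O((1/\delta)\log^2(1/\delta))$. Replacing the simple portion of $\tilde x$ by this new solution yields the desired $x'$. The main delicate point will be verifying that this final substitution does not spoil properties~(1) and~(3); it does not, because both depend solely on the untouched complex portion of $\tilde x$, while property~(2) concerns only the simple portion that we have just rebuilt.
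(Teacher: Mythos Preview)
Your proposal is correct and follows essentially the same route as the paper: decompose block by block, apply Theorem~\ref{thm:thin} inside each block to obtain properties~(1) and~(3), then freeze the complex part and apply the Eisenbrand--Shmonin support bound globally to the simple variables for property~(2). One small imprecision: the area constraint~\eqref{eq:ILP3} for block~$\ell$ is a linear combination of the \emph{per-block} job counts (which your sub-ILP does preserve), not of the aggregated constraints~\eqref{eq:ILP2}; consequently, in the final Eisenbrand--Shmonin step you should keep the $L$ area constraints in the system, giving $2L+|P({\cal B}_2)|$ rather than $L+|P({\cal B}_2)|$ equations, but this is still $O((1/\delta)\log(1/\delta))$ and the bound is unaffected.
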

\begin{proof}
As stated above, the set of configurations $\bar{C}_1^{(\ell)},\ldots, \bar{C}_{\bar{h}_\ell}^{(\ell)}$ equals the set of integral points $Q_{\ell}$ inside the knapsack polytope $\mathcal{P}_\ell = \{C = (q(k_j,C)) \,:\, q \cdot C \le 1/\delta^2 + 1/ \delta \}$ 
Let $\bar{x}=(\bar{x}^{(\ell)})_{\ell=1}^{L}$ be a solution to
\eqref{eq:ILP1}-\eqref{eq:ILP4} where $\bar{x}^{(\ell)}$ corresponds to the variables defining the solution for block $B_{\ell}$. We consider a family of ILPs defined for each $\ell=1,\ldots,L$.
\begin{align*}
 \text{[conf-IP]}_{\ell}\quad  & \sum_{c\in Q_{\ell}} c\cdot x_c = \sum_{c\in Q_{\ell}} c\cdot \bar{x}^{(\ell)}_c,\\
 & \sum_{c\in Q_{\ell}} x_c = \bar{m}_{\ell},\\
 &x_c \in \mathbb{Z}_{\ge0} & \text{ for all }c\in Q_{\ell}. 
\end{align*}
Using Theorem \ref{thm:thin} for each [conf-IP]$_{\ell}$, we obtain new solution $\hat{x}^{(\ell)}$, where each complex configuration is used at most once and $\supp{\hat{x}^{(\ell)}} \in O(1/\delta \log^2(1/\delta))$. Then we define a new solution $\hat{x}$ of ILP \eqref{eq:ILP1}-\eqref{eq:ILP4} defined as $(\hat{x}^{(\ell)})_{\ell}$. In $\hat{x}$ every complex configuration is used at most once and $|\supp{\hat{x}}| \leq L \cdot 2(\bar{d}+1)\log(4(\bar{d}+1)\bar{T}) \in O(1/\delta^2 \log^3(1/\delta))$, where $\bar{d}\le |P(\mathcal{B}_2)|\in O(1/\delta\log 1/\delta)$  and $\bar{T}\in O(1/\delta^2)$. Note that Equation~\eqref{eq:ILP3} of the above ILP holds for the new solution $\hat{x}$ as the set of jobs covered inside a block does not change and hence 
\begin{align*}
\sum_{i} \frac{size(\bar{C}_i^{(\ell)})}{\delta^2 \bar{c}(\ell)}  {\hat{x}_i}^{(\ell)} & = Area(\ell,large) = \sum_{i} \frac{size(\bar{C}_i^{(\ell)})}{\delta^2 \bar{c}(\ell)}  {\bar{x}}_{i}^{(\ell)}.\qedhere
\end{align*}

Finally, consider the ILP \eqref{eq:ILP1}-\eqref{eq:ILP4} and fix each variable $x_i^{\ell}$, for $\bar{C}_i^{(\ell)}$ a complex configuration, to the value $\hat{x}_i^{\ell}$ (and thus the resulting ILP has variables only for simple configurations). Now we can apply the result of Eisenbrand and Shmonin~\cite{eisenbrand_caratheodory_2006} to this ILP. This ensures that any ILP of the form $\{z\in \mathbb{Z}_{\ge 0}: Az=h\}$ admits a solution with support of size $O(N(\log(N) + \Delta))$, where $N$ is the number of rows of $A$ and $\Delta$ is the largest encoding size of an entry of $A$. Recalling that $\frac{size(\bar{C}_i^{(\ell)})}{\delta^2 \bar{c}(\ell)} \in O(1/\delta^2)$, we can apply this result to our case, which yields a solution whose support contains at most $O(1/\delta \log^2(1/\delta))$ simple configurations. Hence, we obtain a solution satisfying all properties of the statement of the theorem.
\end{proof}

\begin{algorithm}
\label{alg:PTASQCmax}
\hspace{4cm}\begin{enumerate}
\item For each job size, guess the number of jobs $v_{j} \leq \bar{n}_j$ covered by complex configurations.
\item For each bin size, guess the number of machines $w_j \leq \bar{m}_j$ used to schedule the set of jobs covered by complex configurations.
\item For each block $\ell$ in $\mathcal{B}_2$, guess the support of simple configurations ${\bar Q}_{s}^{(\ell)} \subseteq Q_{s}^{(\ell)}$ used by a thin solution, with $\sum_{\ell = 1}^L |{\bar Q}_{s}^{(\ell)}| \leq 4(d+1)\log(4(d+1)\bar{T})\in O((1/\e)\log^2 (1/\e))$.
\item Solve the reduced modified MILP, where the integral variables $x_{i}^{\ell}$ are restricted to simple configurations.
\end{enumerate}
\end{algorithm}

\begin{lemma} \label{lm:runningtimeQ}
Algorithm~\ref{alg:PTASQCmax} can be implemented with a running time of $2^{O((1/\e)\log^4(1/\e))} poly(n)$.
\end{lemma}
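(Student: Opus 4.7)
The plan is to bound the running time of each of the four steps of Algorithm~\ref{alg:PTASQCmax} separately, combine them, and observe that step 3 is the dominant contribution. Throughout I would use that $\delta=\Theta(\e)$.

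First, I would apply Lemma~\ref{thin:solution} to control the enumeration in steps 1 and 2. Since the number of machines following a complex configuration is at most $O((1/\delta^{2})\log^{3}(1/\delta))$, and each such configuration contains $O(1/\delta)$ jobs (in the scaled units the capacity is $\le 1/\delta^{2}+1/\delta$ and each large size is at least $1/\delta$), the total number $N_{c}$ of jobs assigned to complex configurations is $O((1/\delta^{3})\log^{3}(1/\delta))$. Because $|P(\mathcal{B}_{2})|$ and $L$ are both $O((1/\delta)\log(1/\delta))$, enumerating every admissible distribution $(v_{j})$ in step~1 and $(w_{\ell})$ in step~2 contributes at most $(N_{c}+1)^{|P(\mathcal{B}_{2})|+L}=2^{O((1/\delta)\log^{3}(1/\delta))}$ to the running time.

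Step~3 is the main bottleneck and the part where our new structural result pays off. For each block $\ell$, Corollary~\ref{cor:vertices} bounds the number of simple configurations by $2^{O(\log^{2}(1/\delta))}$, since the scaled dimension is $d=O((1/\delta)\log(1/\delta))$ and the scaled capacity is $O(1/\delta^{2})$; summing over the $L$ blocks preserves the bound $|Q_{s}|=2^{O(\log^{2}(1/\delta))}$. By Lemma~\ref{thin:solution} a thin solution uses at most $D=O((1/\delta)\log^{2}(1/\delta))$ simple configurations in total across all blocks. Hence the number of supports to enumerate is at most
\[
\sum_{i=0}^{D}\binom{|Q_{s}|}{i}\le (D+1)\,|Q_{s}|^{D}=2^{O((1/\delta)\log^{4}(1/\delta))},
\]
which is the dominant term. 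The key point, and the main obstacle, is making sure the base $|Q_{s}|$ of this bound is the number of \emph{simple} configurations; enumerating over all configurations would only give the weaker $2^{O((1/\e)^{2}\log^{3}(1/\e))}$ bound of previous work.

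Finally, in step~4 the reduced MILP has only the pre-selected $O((1/\delta)\log^{2}(1/\delta))$ integer variables corresponding to simple configurations of $\mathcal{B}_{2}$, together with polynomially many fractional variables for $\mathcal{B}_{3}$. The coefficients are polynomially bounded in $1/\delta$ and in the encoding of the input, so using the mixed-integer variant of Lenstra/Kannan's algorithm~\cite{kannan_minkowskis_1987}, whose running time is $2^{O(N\log N)}\,\text{poly}(\text{input size})$ in the number of integer variables $N$, step~4 takes $2^{O((1/\delta)\log^{3}(1/\delta))}\text{poly}(n)$. Summing the four contributions and substituting $\delta=\Theta(\e)$ yields the claimed bound $2^{O((1/\e)\log^{4}(1/\e))}\text{poly}(n)$.
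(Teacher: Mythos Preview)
Your proposal is correct and follows essentially the same route as the paper: bound the enumeration in steps~1--2 using the complex-configuration count from Lemma~\ref{thin:solution}, bound step~3 by combining the simple-configuration count $2^{O(\log^2(1/\delta))}$ from Corollary~\ref{cor:vertices} with the support bound $O((1/\delta)\log^2(1/\delta))$, and solve the reduced MILP with Kannan's algorithm in step~4. Two minor remarks: the paper additionally runs, for each guessed pair $(v,w)$, a dynamic program to verify that the complex part is actually packable (time $2^{O((1/\delta)\log^2(1/\delta))}$, hence absorbed), and your step~1/2 bound $2^{O((1/\delta)\log^3(1/\delta))}$ is looser than the paper's $2^{O((1/\delta)\log^2(1/\delta))}$---but either way step~3 dominates, so the final bound is unaffected.
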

\begin{proof}
As in the case of identical machines, our algorithm guesses in step 1 the complex configurations and the corresponding jobs. Since the number $M$ of complex configurations within ${\cal B}_2$ ist at most $ O(1/\delta^2 \log^3(1/\delta))$ and there are at most $O(1/\delta)$ many large job per configuration, the total number $N$ of jobs within the complex configurations is at most $O(1/\delta^3 \log^3(1/\delta))$.

To obtain a schedule for the guessed jobs, notice that the number of large job sizes $|P({\cal B}_2)| \le O(1/\delta \log(1/\delta))$. We guess now a vector $v = (v_j)$ with possible job sizes that are covered by the complex configurations. The total number of these vectors is $(N+1)^{|P({\cal B}_2)|} \le (1/\delta^3 \log^3(1/\delta))^{O(1/\delta \log(1/\delta))} = 2^{O(1/\delta \log^2(1/\delta))}$. In addition, we guess a vector $w = (w_\ell)$ with the numbers $w_\ell$ of complex configurations in the block groups $B_\ell$. The number of choices here is at most $(M+1)^L \le (1/\delta^2 \log^3(1/\delta))^{O(1/\delta \log(1/\delta))} = 2^{O(1/\delta \log^2(1/\delta))}$. For each guess $v,w$ we run a dynamic program to test whether the number of job sizes, stored in $v$, fit on the corresponding machines in the blocks $B_\ell$, given by vector $w$. To do this, we run over the machines and store after $\ell$ machines, for $\ell=1,\ldots,M$, the set of all  feasible vectors with job sizes that can be packed into the first $\ell$ machines. This dynamic program
runs in time $M 2^{O(1/\delta \log^2(1/\delta))} = 2^{O(1/\delta \log^2(1/\delta))}$. For each feasible choice of $v,w$ we compute the reduced MILP by $\hat{m}_\ell = m_\ell - w_\ell$ and $\hat{n}_j = n_j - v_j$ and guess the support of a feasible solution $x$ in the MILP; i.e. the simple configurations in ${\cal B}_2$ with value
$x_i^{(\ell)} > 0$. The total number of simple configurations $Q_s^{(\ell)}$ in one bin block can be bounded, using observation $7$, by $2^{O(\log^2(1/\delta))}$. Therefore, the total number of simple configurations in ${\cal B}_2$ is $\sum_{\ell=1}^L |Q_s^{(\ell)}| \le L \cdot 2^{O(\log^2(1/\delta))} = 2^{O(\log^2(1/\delta))}$.
This implies that the number of choices for the support of $x$ is at most

$$ {\sum_\ell |Q_s^{(\ell)}| \choose O(1/\delta \log^2(1/\delta))}
  = { 2^{O(\log^2(1/\delta))} \choose O(1/\delta \log^2(1/\delta))}
  = 2^{O(1/\delta \log^4(1/\delta))}. $$

For each choice we solve a reduced MILP with $d= O(1/\delta \log^2(1/\delta))$ integral variables (step 4). The total size $s$ of the MILP can be bounded by $s \le poly(n,1/\delta) + n \log(n) 2^{O(1/\delta \log(1/\delta))}$. Using the algorithm by Kannan with runnning time $d^{O(d)} poly(s)$ for an MILP with $d$ variables and size $s$, we obtain a running time to solve one MILP in time $2^{O(1/\delta \log^3(1/\delta))} poly(n)$; using $poly(s) \le poly(n) 2^{O(1/\delta \log(1/\delta))}$. Running over all vectors $v,w$ and all guesses for the simple configurations, we obtain a running time of $2^{O(1/\delta \log^4(1/\delta))} + poly(n)$. 
\end{proof}

The rounding of the fractional variables in the MILP solutions and the packing of the items accordingly works as in \cite{jansen_eptas_2010} and can be done in time $2^{O(1/\delta \log(1/\delta))} poly(n)$. Therefore, the overall running time of the entire algorithm can be bounded by $2^{O(1/\delta \log^4(1/\delta))} + poly(n) = 2^{O(1/\e \log^4(1/\e))} + poly(n) $; using that $\delta \in O(\e)$.

In order to calculate the length of the computed schedule and to specify $\delta$, we use the following result:

\begin{lemma} \cite{jansen_eptas_2010}
If there is a feasible solution of an MILP instance with bin capacities
$\bar{c}(\ell)$ for blocks $B_\ell \in {\cal B}_2 \cup {\cal B}_3$ and
capacities $\bar{c}_i$ for the $K$ largest bins in ${\cal B}_1$, then the entire job set ${\cal J}$ can be packed into bins with capacities $\bar{c}(\ell)(1+2\delta)^2$ for blocks $B_\ell \in {\cal B}_2 \cup {\cal B}_3$ and enlarged capacities $\bar{c}_i (1+3\delta)^2$ for the first $K$ bins.
\end{lemma}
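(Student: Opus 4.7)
The plan is to convert a feasible MILP solution into an actual packing, paying only a $(1+2\delta)^2$ factor in bin capacity for $\mathcal{B}_2 \cup \mathcal{B}_3$ and $(1+3\delta)^2$ for the $K$ largest bins of $\mathcal{B}_1$. The two sources of loss will be (i) rounding the fractional configuration variables $x_i^{(\ell)}$ for $\ell>L$ (corresponding to $\mathcal{B}_3$) down to integers, and (ii) placing the small items indexed by the fractional $y_{j,\ell}$ variables using a greedy procedure. The gap conditions between bin groups, together with the $\delta^2 \bar{c}(\ell)$ granularity of the rounded configurations, keep both losses under control.

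First I would take a basic (vertex) feasible solution of the modified MILP. Since the number of non-trivial constraints is $O(n)$, the number of positive fractional variables is also $O(n)$; in particular, on each block $B_\ell$ with $\ell>L$ there is only a constant number of positive fractional $x_i^{(\ell)}$ variables, and for each job size there is only a constant number of positive $y_{j,\ell}$. The integral $x_i^{(\ell)}$ for $\ell \le L$ give, in conjunction with the area inequality, an actual placement of the large jobs assigned to $\mathcal{B}_2$ that fits inside capacity $\bar{c}(\ell)(1+\delta)$, after we use the bound $\sum_i \mathrm{size}(\bar C_i^{(\ell)}) x_i^{(\ell)} \le m_\ell\bar c(\ell)(1+\delta)$.

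Next I would round each fractional $x_i^{(\ell)}$ in $\mathcal{B}_3$ down to $\lfloor x_i^{(\ell)} \rfloor$, leaving at most a constant number of machines per block $B_\ell$ whose configurations have been dropped. The total processing time of the jobs thereby displaced is at most the capacity of a constant number of $\mathcal{B}_3$-bins, and hence, by the gap $c_{\min}(\mathcal{B}_1)/c_{\max}(\mathcal{B}_3)\ge 1/\gamma$ with $\gamma\in\Theta(\e^2)$, a vanishingly small fraction of a single $\mathcal{B}_1$-bin. These displaced jobs are added to the $K$ largest bins of $\mathcal{B}_1$, where they account for the extra $(1+3\delta)^2$ enlargement (beyond the $(1+2\delta)^2$ enlargement suffered by $\mathcal{B}_2 \cup \mathcal{B}_3$). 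For the small items associated with the $y_{j,\ell}$ variables, I would use a First Fit Decreasing pass into the residual area of block $B_\ell$; since each such job has size $< \delta\bar c(\ell)$, FFD loses at most $\delta\bar c(\ell)$ per machine, absorbed by an additional multiplicative factor of $(1+\delta)$. The jobs of size in $(\bar c(1),\delta c_{K'}]$, counted by $y_{j,0}$, are packed into the free area $S_0$ of $\mathcal{B}_1$, and the tiny jobs $\mathcal{J}_{\mathrm{tiny}}$ previously removed are greedily added into the $\delta\bar c_i$ slack per bin, giving another $(1+\delta)$ factor on $\mathcal{B}_1$.

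The main obstacle is the simultaneous control of the two sources of error in $\mathcal{B}_3$: the rounding of the $O(1)$ fractional configurations per block releases a nontrivial amount of load, and the greedy placement of the $y_{j,\ell}$ must coexist with that displacement inside the same area budget. The trick is to show that these two operations compose multiplicatively in the capacity enlargement (one $(1+\delta)$ coming from the area-constraint slack, the other from the FFD bound), yielding the claimed $(1+2\delta)^2$ factor on $\mathcal{B}_2 \cup \mathcal{B}_3$, while the gap conditions $c_{\min}(\mathcal{B}'_1)/c_{\max}(\mathcal{B}_2)\ge 1/\delta$ and $c_{\min}(\mathcal{B}_1)/c_{\max}(\mathcal{B}_3)\ge 1/\gamma$ ensure that all cascaded error is absorbed by the extra $(1+\delta)$ factor in the $(1+3\delta)^2$ enlargement of $\mathcal{B}_1$.
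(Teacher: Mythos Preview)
The paper does not actually prove this lemma: it is quoted verbatim from~\cite{jansen_eptas_2010} and invoked as a black box, with only the remark that the result is constructive. So there is no in-paper proof to compare your proposal against; what I can do is assess whether your sketch reproduces the argument that the cited reference carries out.

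At the level of strategy you have the right picture: start from a vertex solution of the modified MILP, keep the already-integral configuration variables on~$\mathcal{B}_2$, round the fractional configuration variables on~$\mathcal{B}_3$ down, greedily absorb the small jobs governed by the~$y_{j,\ell}$ into the residual area (losing at most one job of size $<\delta\bar c(\ell)$ per bin), and shove the leftovers upward into~$\mathcal{B}_1$ using the gap conditions. That is exactly the skeleton of the rounding in~\cite{jansen_eptas_2010}, and the way you attribute one $(1+\delta)$ factor to the area-constraint slack and another to the greedy step is the right accounting intuition.

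There is, however, a genuine gap in your argument. You assert that in a basic feasible solution ``on each block $B_\ell$ with $\ell>L$ there is only a constant number of positive fractional $x_i^{(\ell)}$ variables, and for each job size there is only a constant number of positive $y_{j,\ell}$.'' A vertex of the LP gives you only a \emph{global} bound on the number of positive variables, equal to the number of constraints, and here that number is $O(n)$: there are $N$ blocks in~$\mathcal{B}_3$ and up to $P$ job-size constraints, both potentially linear in $n$. Nothing in the vertex property alone forces the fractional support to be $O(1)$ per block or per job size. The original proof does \emph{not} rely on such a per-block constant; instead it exploits the block structure of the constraint matrix together with the gap $c_{\min}(\mathcal{B}_1)/c_{\max}(\mathcal{B}_3)\ge 1/\gamma$ in a more careful cascading argument, bounding the \emph{total} area of all displaced jobs across all of~$\mathcal{B}_3$ by a quantity that is small relative to a single bin of~$\mathcal{B}_1$. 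Your sketch would need that global area bound (and the analogous global control on the fractional~$y_{j,\ell}$) rather than the unjustified per-block constant, and the derivation of the precise factors $(1+2\delta)^2$ versus $(1+3\delta)^2$ hinges on getting that bookkeeping right.
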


Note that the result above is constructive, too. This means that there is also an algorithm that computes a corresponding packing \cite{jansen_eptas_2010}. Using $\bar{c}_i \le c_i (1+\delta)^3$ and $T \le (1+\delta) OPT$ and the lemma above, we can bound the schedule length. If there is a schedule with length at most $T$ and with corresponding bin sizes $c_i = T s_i$,
then the lemma above implies a packing into bins of size $c_i (1+3\delta)^3 (1+3\delta)^2$ and a corresponding schedule length $\le T (1+\delta)^3 (1+3\delta)^2 \le OPT (1+\delta)^4 (1+3\delta)^2 \le OPT (1+16\delta) \le OPT(1+\e)$ for $\delta \le \e/16$ and $\e \le 1$. Using $\delta = \frac{1}{\lceil 16/\e \rceil}$, we obtain $\delta \le \e/16$, $\delta \ge \e/17$, and that $1/\delta = \lceil 16/\e \rceil$ is integral. This concludes the proof for Theorem \ref{thm:QCmax}.

\bibliographystyle{plain}
\bibliography{ptas}

\end{document}